\begin{document}

\conferenceinfo{SIGMETRICS'13,} {June 17--21, 2013, Pittsburgh, PA, USA.}
\CopyrightYear{2013}
\crdata{978-1-4503-1900-3/13/06}
\clubpenalty=10000
\widowpenalty = 10000


\newtheorem{mr}{Main result}
\newtheorem{theorem}{Theorem}[section]
\newtheorem{prop}[theorem]{Proposition}
\newtheorem{lemma}[theorem]{Lemma}
\newtheorem{corollary}[theorem]{Corollary}

\newcommand{\remark}[1]{\noindent \textbf{Remark.} #1}

\renewcommand{\topfraction}{0.85}
\renewcommand{\textfraction}{0.1}
\renewcommand{\floatpagefraction}{0.85}

\renewcommand{\a}{{\bf{a}}}
\newcommand{\A}{{\bf{A}}}
\newcommand{\q}{{\bf{q}}}
\newcommand{\Q}{{\bf{Q}}}
\renewcommand{\S}{{\bf{S}}}
\newcommand{\w}{{\bf{w}}}
\newcommand{\W}{{\bf{W}}}
\newcommand{\x}{{\bf{x}}}

\newcommand{\E}{\mathbb{E}}
\newcommand{\N}{\mathbb{N}}
\renewcommand{\P}{\mathbb{P}}
\newcommand{\R}{{\mathbb{R}}}
\newcommand{\Z}{{\mathbb{Z}}}

\newcommand{\Tcal}{{\mathcal{T}}}
\newcommand{\Rcal}{{\mathcal{R}}}

\newcommand{\drift}{\texttt{d}}

\newcommand{\indicator}[1]{{\rm I}_{\{#1\}}}

\newcommand{\Lone}[1]{\lVert #1 \rVert}
\newcommand{\Linfty}[1]{\lvert #1 \rvert}

\newcommand{\todo}[1]{\fbox{\parbox{.465\textwidth}{\begin{center} \textbf{#1} \end{center}}}}
\newcommand{\generalcomments}[1]{\fbox{\parbox{.465\textwidth}{\textbf{GENERAL COMMENTS. #1}}}}
\newcommand{\comments}[1]{\fbox{\parbox{.465\textwidth}{\textbf{#1}}}}

\title{Lingering Issues in Distributed Scheduling}

\numberofauthors{1}
\author{
\alignauthor
Florian Simatos, Niek Bouman, Sem Borst\titlenote{This work was supported by Microsoft Research through its PhD Scholarship Programme and a TOP grant from NWO.}\\
       \affaddr{Eindhoven University of Technology}\\
       \affaddr{Den Dolech 2}\\
       \affaddr{5612 AZ Eindhoven, The Netherlands}\\
       \email{\{f.simatos, n.bouman, s.c.borst\}@tue.nl}
}

\date{\today}

\maketitle

\begin{abstract}
Recent advances have resulted in queue-based algorithms for medium
access control which operate in a distributed fashion, and yet achieve
the optimal throughput performance of centralized scheduling algorithms.
However, fundamental performance bounds reveal that the ``cautious''
activation rules involved in establishing throughput optimality tend
to produce extremely large delays, typically growing exponentially
in $1/(1{-}\rho)$, with $\rho$ the load of the system,
in contrast to the usual linear growth.

Motivated by that issue, we explore to what extent more ``aggressive''
schemes can improve the delay performance.
Our main finding is that aggressive activation rules induce a lingering
effect, where individual nodes retain possession of a shared resource
for excessive lengths of time even while a majority of other nodes idle.
Using central limit theorem type arguments, we prove that the idleness
induced by the lingering effect may cause the delays to grow with
$1/(1{-}\rho)$ at a quadratic rate.
To the best of our knowledge, these are the first mathematical results
illuminating the lingering effect and quantifying the performance impact.

In addition extensive simulation experiments are conducted to illustrate
and validate the various analytical results.
\end{abstract}

\category{G.3}{Probability and Statistics}{Markov processes}
\terms{Algorithms, Performance, Theory}

\keywords{Delay performance, distributed scheduling algorithms,
heavy traffic scaling, wireless networks} 

\section{Introduction}

As networks continue to grow in size and complexity,
they increasingly rely on scheduling algorithms for efficient allocation
of shared resources and arbitration between users.
As a result, the design and analysis of scheduling algorithms
for complex network scenarios has attracted significant attention
over the last several years.

One of the centerpieces in the scheduling literature is the celebrated
MaxWeight algorithm as proposed in the seminal work \cite{TE92,TE93}.
The MaxWeight algorithm provides throughput optimality and maximum
queue stability in a variety of scenarios, and has emerged
as a powerful paradigm in cross-layer control and resource allocation
problems~\cite{GNT06}.

While not strictly optimal in terms of delay performance, MaxWeight
algorithms do achieve so-called equivalent workload minimization
and offer favorable scaling characteristics in heavy traffic conditions
\cite{Shah12:1,Stolyar04}.
As a further key appealing feature, MaxWeight algorithms only need
information on the queue lengths and instantaneous service rates,
and do not rely on any explicit knowledge of the underlying system
parameters.

On the downside, solving the maximum-weight problem tends to be
challenging and potentially NP-hard. This is exacerbated
in a network setting, where a centralized control entity may be
lacking or require global state information, creating a substantial
communication overhead in addition to the computational burden.
This concern is especially pertinent as the maximum-weight problem
needs to be solved at a high pace, commensurate with the fast time
scale on which scheduling algorithms typically need to operate.

This issue has provided a strong impetus for devising algorithms that
entail \emph{lower computational complexity and communication overhead},
but retain the \emph{maximum stability and throughput guarantees}
of the MaxWeight algorithm.
Various approaches in that direction were proposed in
\cite{CS06,MSZ06,SMS06,SSM07,Tassiulas98,WS06}.
An exciting breakthrough in this quest was recently achieved in the
design of random back-off schemes for wireless medium access control
that seem to offer the best of both worlds.
These schemes operate in a \emph{distributed fashion, requiring no
centralized control entity or global state information}, and yet,
remarkably, provide the \emph{capability of achieving
throughput optimality and maximum stability}.
More specifically, clever algorithms have been developed for finding
the back-off rates that yield any given target throughput vector
in the capacity region \cite{JW08,ME08}.
In the same spirit, powerful algorithms have been devised for adapting
the back-off rates based on queue length information, and been shown
to guarantee maximum stability \cite{Jiang10:0,RSS09a,Shah12:0}.

While the maximum-stability guarantees for the above-mentioned
algorithms have strong appeal, they do not extend to performance metrics
such as expected queue lengths or delays.
In fact, fundamental performance bounds~\cite{Bouman11:1} indicate that
the ``cautious'' back-off functions involved in establishing maximum
stability tend to produce extremely large delays, typically growing
\emph{exponentially} in $1/(1 {-} \rho)$, with $\rho$ the load of the system,
in contrast to the usual \emph{linear} growth.
More specifically, the bounds show that the expected queue lengths grow
as $\psi^{- 1}(1 {-} \rho)$ as $\rho \uparrow 1$.
Here $\psi^{- 1}$ represents the inverse of the (decreasing)
function $\psi$, specifying the probability of a node entering
a back-off as a function of its current queue length.
The bounds may be explained by noting that the queue lengths govern the
fraction of back-off time through the function $\psi$.
Since the fraction of back-off time cannot exceed the surplus capacity
in order for the system to be stable, however, it is ultimately the
amount of surplus $1 {-} \rho$ that dictates the queue lengths through
the function $\psi^{- 1}$.
We note that maximum stability has been established under the
condition that the function $\psi(a)$ decays (no faster than)
\emph{inverse-logarithmically} as $a \to \infty$, i.e., $\psi(a) \sim 1/\log a$.
This entails that $\psi^{- 1}(s)$ grows (no slower than)
\emph{exponentially} in $1 / s$ as $s \downarrow 0$,
yielding the stated exponential growth of $\psi^{- 1}(1 {-} \rho)$
in $1/(1 {-} \rho)$ as $\rho \uparrow 1$.

The above lower bounds suggest that the delay performance may be
improved when the function $\psi$ decays faster, e.g.,
\emph{inverse-polynomially}: $\psi(a) \sim a^{-\beta}$, with $\beta > 0$,
so that $\psi^{- 1}(s) \sim s^{- 1 / \beta}$ as $s \downarrow 0$.
The larger the value of the exponent~$\beta$, the slower the growth of
$\psi^{- 1}(1 {-} \rho) \sim (1 {-} \rho)^{- 1 / \beta}$ as $\rho \uparrow 1$.
In particular, it might seem plausible that for $\beta \geq 1$,
the expected queue lengths will only exhibit the usual linear growth
in $1 / (1 {-} \rho)$ as $\rho \uparrow 1$.
Note that a larger value of~$\beta$ means that a node is more
``aggressive'', in the sense that it is less likely to enter a back-off
and more inclined to hold on to the medium, and hence the
coefficient~$\beta$ will be referred to as the aggressiveness parameter.
It is worth observing that maximum stability for the above back-off
functions is not guaranteed by existing results,
which do not apply for any $\beta > 0$.
In fact, for $\beta > 1$, maximum stability has been shown not to hold
in certain topologies~\cite{GBW12}. \\

In the present paper we aim to gain fundamental insight whether
a larger aggressiveness parameter can improve the delay performance.
Our main finding is that for large values of $\beta$,
a \emph{lingering effect} can cause the mean stationary delay
to increase in heavy traffic as $1/(1{-}\rho)^2$,
and we focus on the simplest topology where this effect occurs.
This topology, described in later sections, may seem at first sight
rather restrictive in view of recent results
\cite{Jiang10:0,RSS09a,Shah12:0} that apply to general topologies.
We believe however that our results give insight into more general
situations and discuss this in Section~\ref{sec:insight}.

The remainder of the paper is organized as follows.
We start in Section~\ref{sec:example} by discussing an example that
will explain the model that we consider as well as the lingering effect.
In Section~\ref{mode} we present a detailed model description.
In Section~\ref{ovmare} we provide an overview of the results
and discuss the main performance implications.
We conduct comprehensive simulation experiments to illustrate the
various heavy traffic results in Section~\ref{ec:simula}.
In Section~\ref{asan} we present a detailed asymptotic analysis
and proof arguments, and we conclude in Section~\ref{sec:insight}
by discussing the broader implications of our results.

\section{Illustrative example} \label{sec:example}


Consider a network consisting of four queues which are split into two
groups, say groups~1 and~2, in such a way that if any queue of group~1
is transmitting a packet, no queue of group~2 may transmit.
A group is said to be \emph{active} if one of its queues is transmitting a packet, and a queue is said to be active if it belongs to the active group. The other group and queues are said to be \emph{inactive}. 

Active queues implement the following algorithm:
after each transmission, each active queue flips a coin
and \emph{advertizes a release} with probability $(1+a)^{-\beta}$, with
$a$ the number of packets that this queue has to transmit and $\beta > 0$.
If the two active queues advertize a release simultaneously,
then active queues become inactive and vice-versa:
such a time is called a \emph{switching time}.
This simple distributed algorithm gives rise to dynamics as illustrated
in Figure~\ref{fig:realtimezoom}, where the system is considered
over three consecutive switching times $t_1$, $t_2$ and $t_3$.
Between switching times, the backlog of active queues is drained
while packets accumulate at inactive queues.
The dynamics shown in Figure~\ref{fig:realtimezoom} are representative
of the case $\beta > 1$ where a switch does not occur until both
active queues have emptied as will be established later.

\begin{figure}[htbp]
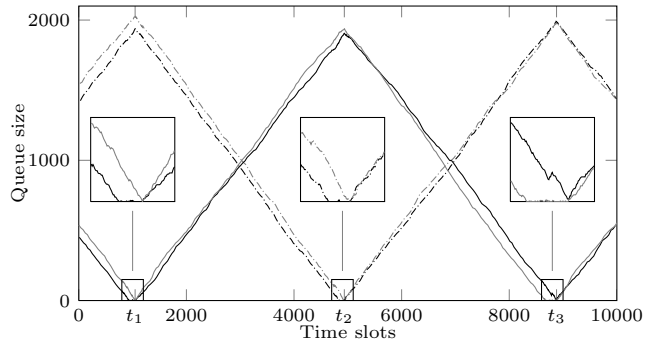

\vspace{2mm}
\begin{tikzpicture} [every pin/.style={fill=none}]
\begin{axis}[
font=\scriptsize,
width=\columnwidth+0.3cm, height=5.5cm,
    xmin=0, xmax=10000,
    ymin=0, ymax=2100,
    xtick={0,1047,2000,4000,4935,6000,8000,8879,10000},
    xticklabels={$0$,$t_1$,$2000$,$4000$,$t_2$,$6000$,$8000$,$t_3$,$10000$},
    every axis x label/.style=
        {at={(ticklabel cs:0.5)},anchor=center},
    ytick={0,1000,2000},
    yticklabels={$0$,$1000$,$2000$},
    every axis y label/.style=
        {at={(ticklabel cs:0.5)},rotate=90,anchor=center},
    scaled ticks=false,
	xlabel={Time slots},
	ylabel={Queue size}]
	\input{fig/thinned}

\coordinate (spypoint) at (axis cs:1000,150);
\coordinate (spypoint2) at (axis cs:4900,150);
\coordinate (spypoint3) at (axis cs:8800,150);
\draw [black] (axis cs:800,0) rectangle (axis cs:1200,150);
\draw [black] (axis cs:4700,0) rectangle (axis cs:5100,150);
\draw [black] (axis cs:8600,0) rectangle (axis cs:9000,150);
\end{axis}
\node[pin={[pin distance=0.8cm]90:{%
    \begin{tikzpicture}[baseline,trim axis left,trim axis right]
    \begin{axis}[
    width=2.7cm, height=2.7cm,
    xmin=800, xmax=1200,
    ymin=0, ymax=150,
    ticks=none]
	\input{fig/spy1}
    \end{axis}
    \end{tikzpicture}%
}}] at (spypoint) {};

\node[pin={[pin distance=0.8cm]90:{%
    \begin{tikzpicture}[baseline,trim axis left,trim axis right]
    \begin{axis}[
    width=2.7cm, height=2.7cm,
    xmin=4700, xmax=5100,
    ymin=0, ymax=150,
    ticks=none]
    \input{fig/spy2}
    \end{axis}
    \end{tikzpicture}%
}}] at (spypoint2) {};

\node[pin={[pin distance=0.8cm]90:{%
    \begin{tikzpicture}[baseline,trim axis left,trim axis right]
    \begin{axis}[
    width=2.7cm, height=2.7cm,
    xmin=8600, xmax=9000,
    ymin=0, ymax=150,
    ticks=none]
    \input{fig/spy3}
    \end{axis}
    \end{tikzpicture}%
}}] at (spypoint3) {};

\end{tikzpicture}
\vspace{-7mm}
\caption{A typical sample path in the case $\beta > 1$.
The three boxes zoom in to show the lingering effect.
One queue hovers around zero while the other queue is yet to empty,
resulting in an inefficient use of the resource.}
\label{fig:realtimezoom}
\end{figure}

Let us now give a flavor of the lingering effect. Imagine that the two active queues start with initial queue lengths
of the same order, say~$Q$. As just mentioned, queues retain the shared resource
until the time $T^*$ at which the last active queue has emptied, thus preventing other queues from activating until this time.
The law of large numbers suggests that $T^*$ is of order~$Q$ (i.e., active queues are drained linearly as in Figure~\ref{fig:realtimezoom}),
but the central limit theorem suggests that up to time~$T^*$, the first active queue to have emptied, while waiting for the other queue to empty,
will be empty of the order of $\sqrt Q$ units of time.
This lingering effect is illustrated in Figure~\ref{fig:realtimezoom},
which will explained in greater detail in Section~\ref{ec:simula}.

This $(1/\sqrt Q)$-fraction of the time the shared resource is used
inefficiently may at first sight seem negligible,
and indeed queues seem to empty at the same time on the coarse time
scale of Figure~\ref{fig:realtimezoom}.
However, we will actually establish that it has a significant impact
in heavy traffic, causing queue lengths to grow at a rate $1/(1-\rho)^2$
as $\rho \uparrow 1$, instead of the optimal $1/(1-\rho)$.



\section{Model description}
\label{mode}

\subsection{Informal description}
\label{sub:informal-description}

Let us now give a more precise definition of our model.
As mentioned in the introduction, in order to analyze the lingering
effect in the simplest possible setting, we focus on a symmetric
system consisting of two groups of $R \geq 2$ queues.
At any given point in time, one of the groups is \emph{active}
while the other is \emph{inactive}.

Time is slotted and inactive queues have simple dynamics,
driven by independent and identically distributed numbers of packet
arrivals in each slot, so that they each simply grow according
to random walks with step distribution denoted by~$\xi$.
During each time slot, active queues increase by independent amounts
distributed as~$\xi$ as well, but, if at least one packet is present at
the start of the slot, an active queue also flushes exactly one packet.

Moreover, at the end of each time slot, each active queue tosses a coin
and advertizes a \emph{momentary release} with probability $\psi(a)$,
with $a$ the number of packets in the queue at the end of the time slot.
This (momentary) release gives inactive queues an opportunity to become
active: if all the active queues \emph{simultaneously} advertize a release,
then inactive queues become active and active queues become inactive.
Such a time is called a \emph{switching time}.
We will assume in the sequel that $\psi(a) = (1+a)^{-\beta}$ for some
parameter $\beta > 0$, called the aggressiveness parameter.
In particular, $\psi(a) \to 0$ as $a \to +\infty$, and so active queues
are less likely to advertize a release when they are highly loaded;
this mechanism thus gives priority to highly loaded queues
in a distributed fashion.
There is a cost associated with advertizing a release:
each time an active queue advertizes a release and is not empty,
it incurs an additional increase distributed according to some random
variable~$\zeta$.

The above-described model qualitatively resembles the ca\-nonical
models for queue-based medium access control mechanisms.
The main difference with continuous-time models in the literature is
that in our model, back-off periods are infinitesimally short
(hence, the releases are qualified as momentary),
and the jump size~$\zeta$ represents the number of packets that would
have arrived during a non-zero back-off period.
The key advantage of our model is that it avoids the complication
of requiring back-off periods to overlap to define a switching time.
Moreover, simulation experiments show that our main results extend
to models with non-infinitesimal back-off periods.
We did not include these simulations to comply with page limitations.
\\

\begin{remark}
	%
	%
	The system may be interpreted
	as a set of $R$ single-server two-queue polling systems,
	where servers are only allowed to switch between queues
	in a synchronized fashion.
	Such models with simultaneous service of several queues
	and synchronized switches are natural in applications (e.g., traffic lights
	at intersections), but appear not to have been considered.
\end{remark}

\subsection{Parameters}

The model described in the previous subsection is defined through
four parameters:
the number $R \geq 2$ of queues in each group,
two integer-valued random variables~$\xi, \zeta \in \N = \{0,1,\ldots\}$
and a real number $\beta \in (0,\infty]$, which defines the $[0,1]$-valued
sequence $(\psi(a), a \geq 0)$ via $\psi(a) = (1+a)^{-\beta}$ for
$a \geq 0$, to be understood for $\beta = +\infty$ as $\psi(0) = 1$
and $\psi(a) = 0$ for $a > 0$.
Note that only the asymptotic behavior of $\psi(\cdot)$ matters,
and our results could easily be extended to any $\psi(\cdot)$
with $a^\beta \psi(a) \to \ell \in (0,\infty)$ as $a \to +\infty$. \\
We assume that $\xi$ and $\zeta$ have finite means, respectively
$\E\xi = \rho/2$ and $\E\zeta = z$, and that $\xi$ has finite variance
denoted by $v = \E[(\xi-\rho/2)^2]$.
It will be argued that the system is stable if and only if $\rho < 1$
and so we will refer to $\rho$ as the load of the system (note that by symmetry, each queue is active half the time, which explains the factor $2$ in the definition $\rho = 2\E\xi$ of $\rho$).

\subsection{Formal description}
\label{sub:formal-description}

Because of the symmetry of the system, we do not need to label the
queues individually, but only need to keep track of the state of
active and inactive queues.
We will consider the system embedded at switching times, and define
$Q^a_r(k)$ and $Q^i_r(k)$ as the numbers of packets in the $r$th active
and inactive queue, respectively, just after the $k$th switch occurred.
We will be interested in the Markov chain $(\Q(k), k \geq 0)$
which we also write as $\Q = (\Q^a, \Q^i)$
with $\Q^a = (Q_r^a, r \in \Rcal)$, $\Q^i = (Q_r^i, r \in \Rcal)$, $\Rcal = \{1,\ldots, R\}$ and we reserve in the sequel bold
notation for vectors (of functions or numbers).

As informally described in Section~\ref{sub:informal-description},
the dynamics of $\Q$ in between two switching times are governed by two
$R$-dimensional processes $\S = (S_r, r \in \Rcal)$
and $\A = (A_r, r \in \Rcal)$: $\S$ gives the increments of the
inactive queues, while $\A$ gives the state of the active queues.
The dynamics are as follows:
\begin{itemize}
\item the $2R$ processes $A_r, S_r$ are independent;
\item for each $r \in \Rcal$, $(S_r(k), k \geq 0)$ is a random walk
with step distribution~$\xi$, started at~0;
\item for each $r \in \Rcal$, $(A_r(k), k \geq 0)$ is
a space-inho\-mogeneous random walk with the following dynamics:
for any $a \in \N$ and any function $f: \N \to [0,\infty)$, we have
	\begin{multline} \label{eq:A}
		\E\left[f(A_r(1)) \mid A_r(0) = a\right]\\
		= \E\left[f\left(Y(a) + \zeta \indicator{Y(a) > 0, U < \psi(Y(a))} \right) \right],
	\end{multline}
	where $Y(a) = a + \xi - \indicator{a > 0}$ is the number of packets at the end of the time slot and $U$ is uniformly distributed in $[0,1]$, and $U$, $\xi$ and $\zeta$ are independent.
\end{itemize}
The equation~\eqref{eq:A} describes the dynamics of an active queue
and can be interpreted as follows. At each time slot, an active queue increases by~$\xi$
and, if not empty at the beginning of the time slot, flushes a packet, which brings the queue to state $Y(a)$. If $U < \psi(Y(a))$, we say that
the active queue advertizes a release, which thus happens with the
(conditional) probability $\psi(Y(a))$ as described in
Section~\ref{sub:informal-description}.
If at the end of the time slot the queue is not empty and it advertizes
a release, i.e., $Y(a) > 0$ and $U < \psi(Y(a))$,
then the active queue also incurs an additional increase by~$\zeta$.

To define the $2R$-dimensional process $\Q = (\Q^a, \Q^i)$ from $\S$
and $\A$, it remains to adopt notation for the switching time,
which we denote by~$T^*$.
Thus $T^*$ is the first time at which all active queues advertize
a release at the same time.
Note that $T^*$ and $\S$ are independent.
With these definitions, the dynamics of $\Q$ as informally described
in Section~\ref{sub:informal-description} obey the following equation:
for any $\q = (\q^a, \q^i) \in \N^R \times \N^R$ and any function
$f: \N^{2R} \to [0,\infty)$,
\begin{multline} \label{eq:Q}
	\E\left[f(\Q(1)) \mid \Q(0) = \q\right]\\
	= \E\left[f(\q^i + \S(T^*), \A(T^*)) \mid \A(0) = \q^a\right].
\end{multline}

The special case $\beta = +\infty$ will be of particular importance.
Indeed, we will show that it is representative of the system's behavior
in the range $\beta > 1$.
The case $\beta = +\infty$ has been studied in~\cite{Feuillet10:0},
which refer to it as the random capture algorithm.
When $\beta = +\infty$, active queues only advertize a release when
they are empty (in which case there is no additional term $\zeta$), and so $\A(T^*) = {\bf{0}}$ and $T^* = \inf \left\{ k \geq 0: \A(k) = {\bf{0}} \right\}$.

\subsection{Additional notation}

In the remainder of the paper, and similarly as we have just done in~\eqref{eq:Q}, we will use the common symbol $\E$ to denote expectation with respect to the laws of $\Q$ and $(\A, \S)$. Initial conditions will be denoted by a subscript, and it should always be clear from the context whether we consider initial conditions of $\Q$, $\A$ or some $A_r$ (remember that $\S(0)$ is always equal to ${\bf{0}}$). For instance,~\eqref{eq:A} and~\eqref{eq:Q} can be rewritten as follows:
\[ \E_a\left[f(A_r(1)) \right] = \E\left[f\left(Y(a) +
\zeta \indicator{Y(a) > 0, U < \psi(Y(a))} \right) \right], \]
and
\[ \E_{\q}\left[f(\Q(1)) \right] = \E_{\q^a}\left[f(\q^i + \S(T^*), \A(T^*)) \right]. \]
The probability distributions corresponding to these various
expectations are written as $\P_a$, $\P$, $\P_\q$ and $\P_{\q^a}$. We also define $\P_\infty$ with corresponding expectation $\E_\infty$ as the laws of $\Q$ and $(\A, \S)$ started in the stationary distribution of $\Q$, provided $\Q$ is positive recurrent.

When $\beta = +\infty$, we see based on~\eqref{eq:Q} that $\Q^i(k) = {\bf{0}}$, except maybe for $k = 0$.
In particular, when $\Q$ is positive recurrent, $\Q^i(0)$ is $\P_\infty$-almost surely equal to $\bf{0}$ and~\eqref{eq:Q} therefore becomes
\begin{equation} \label{eq:stationarity-RC}
	\E_\infty\left[f(\Q^a(0))\right] = \E_\infty\left[f(\S(T^*))\right] \quad (\beta = +\infty).
\end{equation}
Further, in the remainder of the paper we let
\[ \tau_r = \inf\{ k \geq 0: A_r(k) = 0 \}, \hspace*{.4in} \ \tau_{\max} = \max_{r \in \Rcal} \tau_r, \]
and we define the $\tau_{(r)}$'s as the order statistics of the $\tau_r$'s, i.e., $\tau_{(1)} \leq \cdots \leq \tau_{(R)}$ and $\{ \tau_{(r)} \} = \{ \tau_{r} \}$. Let finally $\Linfty{\cdot}$ be the $L_\infty$ norm and $\Lone{\cdot}$ be the $L_1$ norm, i.e., if $J \geq 1$ and $\x \in \R^J$ then $\Linfty{\x} = \max_j \Linfty{x_j}$ (which is just the absolute value for $J = 1$) and $\Lone{\x} = \Linfty{x_1} + \cdots + \Linfty{x_J}$.

\section{Overview of main results}
\label{ovmare}

\subsection{Two regimes}

As stated in the introduction, we aim to gain fundamental insight
in the impact of the function $\psi(\cdot)$,
through the aggressiveness parameter~$\beta$,
on the system performance in terms of expected queue lengths and delays.
We will demonstrate, based on a combination of heuristic arguments,
simulation experiments and theoretical results, that the system's
behavior changes as $\beta$ increases from~0 to~$\infty$.
In the small $\beta$ regime, typically $\beta < 1/2$, the system is
characterized by a mean-reverting effect that induces (stationary)
queue lengths of the order of $1/(1-\rho)^{1/\beta}$.
This regime was already hinted upon in~\cite{Bouman11:1},
where a corresponding lower bound was rigorously proved in a general setting.
In the present paper we go one step further, by providing a heuristic
argument in Section~\ref{sub:small-beta}, explaining why this lower
bound is sharp for small~$\beta$, and corroborating this by extensive
simulation results in Section~\ref{sub:simu-beta}.
However, when $\beta$ increases, this mean-reverting effect vanishes,
which causes the lower bound to become loose.
For $\beta > 1$, we show that the system's performance is dominated
by another phenomenon, which we call a lingering effect.
The investigation of this latter effect constitutes the main
contribution of the paper.
We provide a heuristic explanation of this effect
in Section~\ref{sub:lingering-effect},
examine it via simulation experiments in Section~\ref{sub:simu-beta},
and prove various theoretical results in Section~\ref{asan}.

\subsection{Small $\beta$: a mean-reverting effect}
\label{sub:small-beta}

Consider for a moment a given queue and let $\Delta(a)$ be the mean increase of this queue starting from level $a$. Because of symmetry, the queue will be active half the time and inactive the other half of the time, so that
\[ \Delta(a) \approx \frac{1}{2} \E_a \left[ \text{increase} \mid \text{inactive} \right] + \frac{1}{2} \E_a \left[ \text{increase} \mid \text{active} \right]. \]
From the dynamics described in Section~\ref{sub:formal-description}
we may deduce for large $a$ (approximating $Y(a) \approx a$
and neglecting the possibility of $Y(a) = 0$) that
$2\Delta(a) \approx \E\xi + \E(\xi - 1 + \zeta \indicator{U < \psi(a)})$,
which leads to \[ 2\Delta(a) \approx \rho - 1 + z \psi(a). \]
This last approximation points to a mean-reverting effect:
the average drift is negative for $a > a^*$ and positive for $a < a^*$,
where $a^*$ is determined by the equation $\rho - 1 + z \psi(a^*) = 0$,
which gives, since $\psi(a) = (1+a)^{-\beta}$,
$a^* \approx 1/(1-\rho)^{1/\beta}$.
Using the convexity of $\psi(\cdot)$ and Jensen's inequality,
it is not difficult to convert the above back-of-the-envelope
computations into a rigorous proof and show that $1/(1-\rho)^{1/\beta}$
is a lower bound for the stationary mean number of packets in the system,
as established in~\cite{Bouman11:1} in a continuous-time setting.

However, our explanation of this lower bound through this
mean-reverting effect goes one step further, and shows that this lower
bound should be sharp for small values of~$\beta$.
Indeed, $\Delta(a)$ is the mean drift obtained by averaging over the
active and inactive states.
Roughly speaking, this quantity describes the drift experienced
by a queue if the queue rapidly changes states, i.e., if the time
needed for the queue to switch state is negligible compared to~$n$,
with $n$ the length of the queue.
Otherwise, the queue stays in the active or inactive state for a long period of time, during which it experiences (almost) constant drift, either positive or negative.
This explanation bears some similarity with a random walk in
a dynamic random environment, where the random walk essentially sees
a constant environment if the random environment mixes rapidly.

To conclude our arguments, it remains to note that, at least intuitively,
queues will switch more rapidly for smaller values of~$\beta$.
We will provide simulation experiments showing that the stationary
mean number of packets grows as $1/(1-\rho)^{1/\beta}$ for small
values of~$\beta$.
As $\beta$ increases however, this mean-reverting effect vanishes.
The main contribution of this paper is to reveal and quantify
a previously unknown effect, called the lingering effect,
which explains the system's behavior in this case.

\subsection{Large $\beta$: the effect of lingering}
\label{sub:lingering-effect}

As described in the introduction, when the function $\psi(\cdot)$
decays sufficiently fast, once a queue gains possession of the resource,
it holds onto it, even when some or all of the other queues in the
same group are empty, and it would be more efficient for the queues
in the other group to receive the resource.
This causes a lingering effect as illustrated in
Figure~\ref{fig:realtimezoom} for a scenario with $R = 2$, $\beta = 2$.
It may appear that the two queues in the same group drain around the
same time (as can indeed be shown to be the case on a ``fluid scale'').
When we zoom in, however, we see that there is actually a time period
where one of the queues is already empty, while the other one clings to the
resource and prevents the two queues in the other group from activating.

Let us now provide a somewhat more technical description of this
phenomenon in the case $\beta = +\infty$, where it is easiest to see
thanks to the simplification of the dynamics~\eqref{eq:stationarity-RC}.
Moreover, we will argue that the case $\beta \in (1, +\infty)$ is
essentially a perturbation of the case $\beta = +\infty$.

By applying~\eqref{eq:stationarity-RC} to $f = \Lone{\cdot}$, we obtain
$\E_\infty \left( \Lone{\Q^a(0)} \right) = \E_\infty \left( \Lone{\S(T^*)} \right)$
and since $\Lone{\S(\cdot)}$ is a random walk with drift $R \E\xi$
independent of $T^*$ and $\A(0) = \Q^a(0)$ (by definition), we obtain
$\E_\infty \left( \Lone{\A(0)} \right) = R \E(\xi) \E_\infty\left( T^* \right)$
and so by symmetry,
\begin{equation} \label{eq:A_1}
	\E_\infty \left( A_1(0) \right) = \E(\xi) \E_\infty\left( T^* \right).
\end{equation}
The goal is now to relate $\E_\infty(T^*)$ to $\E_\infty(A_1(0))$. Remember that $T^* = \inf\{ k: \A(k) = {\bf{0}} \}$ when $\beta = +\infty$.
It is not difficult to show that $T^* \approx \tau_{\max}$,
essentially because once all queues have hit~0, it is only a matter
of constant time for all queues to be empty simultaneously
(this will be justified in Lemma~\ref{lemma:T^*}).
Then, the central limit theorem shows that
$\tau_r \approx A_r(0) / (1-\E\xi) + A_r(0)^{1/2}$
(where we neglect multiplicative constants, possibly random, appearing in front
of first- or second-order terms and that do not influence the order of magnitude of the final result), which leads to the approximation
$\tau_{\max} \approx |\A(0)| / (1-\E\xi) + |\A(0)|^{1/2}$.
Since under $\P_\infty$ queues are symmetric, we have $|\A(0)| \approx A_1(0) + A_1(0)^{1/2}$ which finally leads to $T^* \approx A_1(0) / (1-\E\xi) + A_1(0)^{1/2}$, i.e.,
\[ \E_\infty \left( A_1(0) \right) \approx \frac{\E\xi}{1-\E\xi} \E_\infty \left( A_1(0) \right) + \E_\infty\left[ A_1(0)^{1/2} \right]. \]
Thus upon a concentration-like result of the kind
\[ \E_\infty[A_1(0)^{1/2}] \approx [\E_\infty(A_1(0))]^{1/2} \]
it is reasonable to expect
\[ \left( 1 - \frac{\E\xi}{1-\E\xi} \right) \E_\infty \left( A_1(0) \right) \approx \left[ \E_\infty(A_1(0)) \right]^{1/2}. \]
Since $1-\E(\xi) / (1-\E\xi) \approx 1-\rho$ this shows that $\E_\infty(A_1(0))$, and hence $\E_\infty(\Lone{\Q(0)})$, should grow as $1/(1-\rho)^2$.
While admittedly crude, the above heuristic arguments provide the
correct estimates, and serve as a useful guide for a rigorous proof
in Section~\ref{infinitebeta}.

As reflected in the above computations, the square factor really stems
from the relation $T^* \approx \tau_{(1)} + \Linfty{\A(0)}^{1/2}$, i.e.,
$T^*$ occurs somehow long after $\tau_{(1)}$, the time at which it
would be optimal to switch in order to avoid inefficient use of the resource.
But it is difficult to make the system switch exactly at $\tau_{(1)}$
in a distributed fashion, and here the penalty incurred is a square root.
Interestingly, the penalty may seem negligible, since it is only
a square root, but this small inefficiency has a significant impact
in heavy traffic.

We explain in the next subsection how we formalize the analysis
of this square root effect.

\subsection{Main results}

In this subsection we present our main results. They are discussed
in the following sections based on a combination of rigorous proofs, heuristic arguments and simulation results, and we explain our contributions in more detail at the end of this subsection. Our first main result introduces the notion of \emph{scaling exponent}, which essentially determines the polynomial growth rate of the mean number of packets in stationarity.

\begin{mr} \label{mr:1}
	$\Q$ is positive recurrent if $\rho < 1$ and transient if $\rho > 1$. Moreover, the limit
	\begin{equation} \label{eq:def-alpha}
		\lim_{\rho \uparrow 1} \left( \frac{\log\E_\infty(\Lone{\Q(0)})}{\log(1/(1-\rho))} \right) \stackrel{\text{\normalfont{def}.}}{=} \alpha
	\end{equation}
	exists and is called scaling exponent.
\end{mr}
\begin{remark}
The transience of $\Q$ when $\rho > 1$ is easy to see. Indeed, if $L(k)$ is the number of packets at the beginning of the $k$th time slot, $L$ is lower bounded by a random walk with drift $R(\rho-1)$. Thus when $\rho > 1$, we have $L(k) \to +\infty$ as $k \to +\infty$ and since $(\Lone{\Q(k)}, k \geq 0)$ is a subsequence of $(L(k), k \geq 0)$ this proves the transience of $\Q$.

Stability for $\rho < 1$ is very intuitive but more challenging to establish. If the active queues are in state $\a = (a_r)$ with $a_r > 0$ for every $r$, the variation of the mean number of packets in the system over the next time slot is equal to
\[ -R\left(1-\rho - z \sum_{r=1}^R \psi(a_r) \right). \]
Since $\psi(a) \to 0$ as $a \to +\infty$, the drift is negative,
close to $-R(1-\rho)$, when each $a_r$ is large enough. The problem
in formalizing this argument is twofold: first, in order to prove stability, one must be able to control every possible initial configuration, not only those where every $a_r$ is large;
second, this argument considers the system on the normal time scale, whereas we are interested in the system embedded at switching times.
\end{remark}
\\

Our interest in the scaling exponent comes from an expected polynomial growth of $\E_\infty(\Lone{\Q(0)})$ in heavy traffic. In general, we expect as $\rho \uparrow 1$ a behavior of the kind
\begin{equation} \label{eq:approx}
	\E_\infty \left( \Lone{\Q(0)} \right) \approx \frac{C}{(1-\rho)^\alpha}
\end{equation}
for some finite constant $C > 0$ (note that this would be stronger than~\eqref{eq:def-alpha}).
The scaling exponent depends on the four model parameters~$R$, $\xi$,
$\zeta$ and~$\beta$.
However, we will mostly be interested in the dependence of~$\alpha$
on~$\beta$ and thus write $\alpha(\beta)$ when the other three
parameters are kept fixed.
In fact, our results suggest that $\alpha$ does not depend on these other
parameters, at least in the two extreme regimes we studied in detail. As explained in Section~\ref{sub:lingering-effect}, the lingering makes $\E_\infty(\Lone{\Q(0)})$ grow as $1/(1-\rho)^2$ when $\beta > 1$, which can be formalized as follows.
\begin{mr} \label{mr:large-beta}
	If $\beta > 1$, then $\alpha(\beta) = 2$.
\end{mr}

It seems quite challenging to prove the existence of, and find a closed-form expression for, $\alpha(\beta)$ when $\beta \leq 1$. Nonetheless, the heuristic arguments of Section~\ref{sub:small-beta} suggest that the
lower bound $\alpha(\beta) \geq 1/\beta$ of~\cite{Bouman11:1} is sharp as $\beta \downarrow 0$, which leads to the following result.
\begin{mr} \label{mr:small-beta}
	$\beta \alpha(\beta) \to 1$ as $\beta \to 0$.
\end{mr}


We now explain in more detail how these three results are established
in the rest of the paper.
The case $\beta = +\infty$ is treated rigorously: positive recurrence
of $\Q$ when $\rho < 1$ and the result that $\alpha(\infty) = 2$ are
proved in Section~\ref{infinitebeta}. For $\beta > 1$, we explain in Section~\ref{sub:perturbation} why it can be seen as a perturbation of the case $\beta = +\infty$: we give some partial results toward a full proof in Sections~\ref{subsub:perturbation} to~\ref{subsub:second-step}, and heuristic arguments in Section~\ref{subsub:intermediate-beta} explaining the technical steps missing for a complete proof. These results are backed up by simulation results in Section~\ref{ec:simula}. Finally, the main result~\ref{mr:small-beta} is discussed based on the simulation results of Section~\ref{sub:simu-beta}, which back up the heuristic arguments of Section~\ref{sub:small-beta}.

\section{Simulation experiments}
\label{ec:simula}

\medskip

In the previous section we provided an overview of the main results
characterizing the heavy traffic behavior of the expected queue lengths.
Before presenting detailed proof arguments in the case $\beta > 1$
in the next section, we first discuss comprehensive simulation
experiments that we conducted to illustrate the stated growth behavior.

The detailed asymptotic analysis and proofs in the next section will
reveal that the value of~$R$ and the precise distributions of~$\xi$
and $\zeta$ do not affect the stability of the system or the value
of the scaling exponent.
Throughout this section we therefore focus on the case where $\xi$ is
geometrically distributed with parameter $2/(2+\rho)$ and $\P(\zeta=1)=1$.
We ran simulations using different distributions for $\xi$
and $\zeta$ as well, including extreme cases such as distributions
with infinite third moment (even infinite second moment for~$\zeta$).
Because of page limitations we do not include these cases,
but they yielded very similar results.

\subsection{Simulations for the main result~\ref{mr:1}} \label{sub:simu-mr1}

\begin{figure}[htbp]
	\vspace{2mm}
	\input{fig/stable}
	\vspace{-7mm}
	\caption{$\Lone{\Q(k)}$ vs.\ $k$ for $R = \beta = 2$ and $\rho=0.99$.}
	\label{fig:stable}
\end{figure}

We now examine the case $R=\beta=2$ in detail. Figures~\ref{fig:stable} and~\ref{fig:instable} show the evolution of $\Lone{\Q(k)}$, starting at $\Q(0)={\bf{0}}$, for $\rho=0.99$ and $\rho=1.01$ respectively.
When $\rho=0.99$, Figure~\ref{fig:stable} shows that $\Lone{\Q(k)}$ fluctuates between $2000$ and $8000$ for $k$ large enough, which strongly suggests that $\Q$ is positive recurrent.
\begin{figure}[htbp]
	\vspace{2mm}
    \begin{tikzpicture}%
\begin{axis}[
font=\scriptsize,
width=\columnwidth+0.5cm, height=5.5cm,
    xmin=0, xmax=300,
    ymin=0, ymax=1500000,
    xtick={0,100,200,300},
    every axis x label/.style=
        {at={(ticklabel cs:0.5)},anchor=center},
    ytick={500000,1000000,1500000},
    every axis y label/.style=
        {at={(ticklabel cs:0.5)},rotate=90,anchor=center},
    scaled ticks=true,
	xlabel={$k$},
	ylabel={$\Lone{\Q(k)}$}]

\addplot[color=black] plot coordinates {
	(1,	3)
	(2,	4)
	(3,	7)
	(4,	8)
	(5,	8)
	(6,	22)
	(7,	22)
	(8,	29)
	(9,	26)
	(10,	33)
	(11,	36)
	(12,	42)
	(13,	78)
	(14,	71)
	(15,	68)
	(16,	71)
	(17,	134)
	(18,	142)
	(19,	127)
	(20,	156)
	(21,	178)
	(22,	216)
	(23,	288)
	(24,	301)
	(25,	312)
	(26,	346)
	(27,	413)
	(28,	439)
	(29,	448)
	(30,	547)
	(31,	660)
	(32,	670)
	(33,	634)
	(34,	584)
	(35,	571)
	(36,	651)
	(37,	560)
	(38,	651)
	(39,	646)
	(40,	636)
	(41,	657)
	(42,	633)
	(43,	759)
	(44,	773)
	(45,	868)
	(46,	924)
	(47,	1017)
	(48,	1111)
	(49,	1206)
	(50,	1289)
	(51,	1372)
	(52,	1481)
	(53,	1555)
	(54,	1666)
	(55,	1656)
	(56,	1729)
	(57,	1860)
	(58,	2044)
	(59,	1921)
	(60,	2046)
	(61,	2184)
	(62,	2337)
	(63,	2467)
	(64,	2726)
	(65,	2801)
	(66,	2901)
	(67,	3031)
	(68,	3398)
	(69,	3548)
	(70,	3596)
	(71,	3774)
	(72,	3896)
	(73,	4239)
	(74,	4359)
	(75,	4631)
	(76,	4818)
	(77,	4958)
	(78,	5136)
	(79,	5293)
	(80,	5242)
	(81,	5170)
	(82,	5354)
	(83,	5397)
	(84,	5579)
	(85,	5890)
	(86,	6150)
	(87,	6547)
	(88,	6774)
	(89,	7268)
	(90,	7416)
	(91,	7593)
	(92,	8133)
	(93,	8152)
	(94,	8569)
	(95,	9103)
	(96,	9644)
	(97,	10388)
	(98,	10946)
	(99,	11339)
	(100,	11638)
	(101,	12319)
	(102,	13221)
	(103,	13605)
	(104,	13865)
	(105,	14011)
	(106,	14668)
	(107,	14912)
	(108,	15336)
	(109,	15813)
	(110,	16563)
	(111,	17236)
	(112,	17712)
	(113,	18320)
	(114,	18595)
	(115,	18902)
	(116,	19218)
	(117,	19668)
	(118,	19769)
	(119,	20250)
	(120,	20397)
	(121,	20860)
	(122,	21583)
	(123,	22213)
	(124,	23674)
	(125,	24053)
	(126,	24483)
	(127,	24412)
	(128,	25356)
	(129,	26406)
	(130,	26769)
	(131,	27380)
	(132,	28322)
	(133,	29261)
	(134,	30350)
	(135,	31067)
	(136,	32241)
	(137,	33685)
	(138,	34342)
	(139,	34689)
	(140,	35605)
	(141,	36498)
	(142,	37997)
	(143,	38741)
	(144,	40088)
	(145,	40966)
	(146,	41913)
	(147,	42786)
	(148,	43730)
	(149,	45888)
	(150,	47529)
	(151,	49102)
	(152,	50427)
	(153,	51239)
	(154,	53031)
	(155,	53840)
	(156,	54896)
	(157,	56609)
	(158,	58248)
	(159,	59968)
	(160,	60861)
	(161,	62770)
	(162,	64263)
	(163,	65437)
	(164,	66408)
	(165,	68090)
	(166,	69326)
	(167,	71014)
	(168,	71976)
	(169,	73988)
	(170,	75768)
	(171,	78521)
	(172,	80788)
	(173,	83174)
	(174,	85041)
	(175,	87504)
	(176,	90052)
	(177,	92788)
	(178,	96663)
	(179,	98488)
	(180,	100382)
	(181,	102839)
	(182,	105847)
	(183,	108469)
	(184,	111657)
	(185,	114855)
	(186,	117736)
	(187,	120738)
	(188,	124201)
	(189,	126792)
	(190,	129539)
	(191,	133020)
	(192,	136193)
	(193,	140412)
	(194,	143250)
	(195,	148070)
	(196,	153818)
	(197,	156656)
	(198,	159577)
	(199,	162521)
	(200,	166320)
	(201,	169054)
	(202,	172797)
	(203,	177497)
	(204,	181783)
	(205,	184934)
	(206,	189504)
	(207,	192655)
	(208,	194800)
	(209,	200469)
	(210,	204820)
	(211,	207619)
	(212,	211771)
	(213,	215436)
	(214,	221292)
	(215,	227574)
	(216,	231854)
	(217,	238061)
	(218,	243814)
	(219,	249627)
	(220,	255146)
	(221,	262218)
	(222,	269225)
	(223,	274635)
	(224,	278826)
	(225,	283391)
	(226,	289353)
	(227,	295643)
	(228,	303093)
	(229,	312041)
	(230,	319593)
	(231,	327680)
	(232,	334730)
	(233,	344209)
	(234,	351229)
	(235,	359008)
	(236,	367073)
	(237,	374835)
	(238,	384860)
	(239,	394862)
	(240,	403687)
	(241,	411425)
	(242,	420217)
	(243,	431386)
	(244,	439844)
	(245,	452074)
	(246,	463150)
	(247,	473745)
	(248,	486347)
	(249,	494157)
	(250,	503491)
	(251,	514722)
	(252,	527623)
	(253,	542531)
	(254,	555614)
	(255,	567728)
	(256,	577943)
	(257,	590557)
	(258,	602524)
	(259,	615807)
	(260,	630650)
	(261,	645689)
	(262,	660276)
	(263,	674645)
	(264,	690129)
	(265,	703947)
	(266,	720377)
	(267,	735135)
	(268,	749188)
	(269,	767612)
	(270,	783367)
	(271,	798679)
	(272,	817351)
	(273,	834454)
	(274,	852642)
	(275,	870408)
	(276,	888906)
	(277,	908203)
	(278,	928879)
	(279,	947594)
	(280,	970202)
	(281,	994840)
	(282,	1013908)
	(283,	1038332)
	(284,	1065879)
	(285,	1086260)
	(286,	1107550)
	(287,	1135720)
	(288,	1159838)
	(289,	1181272)
	(290,	1208355)
	(291,	1237700)
	(292,	1262165)
	(293,	1290590)
	(294,	1314187)
	(295,	1344119)
	(296,	1373582)
	(297,	1402433)
	(298,	1434470)
	(299,	1469659)
	(300,	1499164)
	(301,	1532071)
	(302,	1562530)
	(303,	1596486)
	(304,	1630586)
	(305,	1663557)
	(306,	1701019)
	(307,	1742116)
	(308,	1782502)
	(309,	1820449)
	(310,	1859605)
	(311,	1901035)
	(312,	1937809)
	(313,	1985254)
	(314,	2027700)
	(315,	2066534)
	(316,	2108573)
	(317,	2153965)
};
\end{axis}
\end{tikzpicture}%
	\vspace{-7mm}
	\caption{$\Lone{\Q(k)}$ vs.\ $k$ for $R = \beta = 2$ and $\rho=1.01$.}
	\label{fig:instable}
\end{figure}

When $\rho=1.01$, Figure~\ref{fig:instable} shows that $\Lone{\Q(k)}$ increases until we stop the simulation when $1.5\cdot 10^6$ packets are present in the system. Note that for a transient system we would expect that, when the queues are large, the total queue size grows by a constant amount on average in every time slot and that the time between two consecutive switching times increases. This explains the super-linear growth of $\Lone{\Q(k)}$ as we consider the system at switching times. In fact, the reasoning in Section~\ref{sub:lingering-effect} suggests that when $\Lone{\Q(k)}$ is large we have $\Lone{\Q(k+1)} / \Lone{\Q(k)} \approx \E\xi/(1-\E\xi)$. This gives a heuristic explanation for the exponential growth (at rate $\rho / (2-\rho) > 1$) observed in Figure~\ref{fig:instable}.  All in all, this suggests that $\Q$ is transient for $\rho=1.01$.

Observe moreover that it may in general be difficult to distinguish between
positive recurrent and transient systems based on simulation results.
Here however, $\Q$ obeys two clearly distinguishable types of behavior:
stochastic fluctuations when $\rho < 1$,
and almost deterministic exponential growth when $\rho > 1$.

\subsection{Simulations for the main results~\ref{mr:large-beta} and~\ref{mr:small-beta}} \label{sub:simu-beta}

\begin{figure}[htbp]
	\vspace{2mm}
	\begin{tikzpicture}%
\begin{axis}[
font=\scriptsize,
width=\columnwidth+1cm, height=5.5cm,
    xmin=2, xmax=8,
    ymin=1.5, ymax=2,
    xtick={2,4,6,8},
    every axis x label/.style=
        {at={(ticklabel cs:0.5)},anchor=center},
    scaled ticks=false,
	xlabel={$\log(1/(1-\rho))$},
    legend cell align=left,
    legend style={at={(1,1)}, anchor=north east, inner xsep=1pt, inner ysep=1pt}]

\legend{
	$F(\rho,2)$\\
	Regression result\\%
}
\addplot[color=black] plot coordinates {
	(2.014903,	1.579890)
	(2.420368,	1.643770)
	(2.825833,	1.691541)
	(3.231298,	1.733996)
	(3.636763,	1.768534)
	(4.042229,	1.786989)
	(4.447694,	1.817948)
	(4.853159,	1.831011)
	(5.258624,	1.855233)
	(5.664089,	1.867044)
	(6.069554,	1.867002)
	(6.475019,	1.874079)
};
\addplot[color=gray,densely dashed] plot coordinates {
	(2.014903,	1.610193)
	(2.420368,	1.675226)
	(2.825833,	1.721597)
	(3.231298,	1.756330)
	(3.636763,	1.783319)
	(4.042229,	1.804893)
	(4.447694,	1.822534)
	(4.853159,	1.837227)
	(5.258624,	1.849654)
	(5.664089,	1.860302)
	(6.069554,	1.869527)
	(6.475019,	1.877597)
	(6.880484,	1.884716)
	(7.285949,	1.891043)
	(7.691415,	1.896702)
	(8.096880,	1.901795)
	(8.502345,	1.906402)
	(8.907810,	1.910589)
	(9.313275,	1.914412)
	(9.718740,	1.917916)
};
\end{axis}
\end{tikzpicture}%
	\vspace{-7mm}
	\caption{Approximating $\alpha(2)$ for $R = 2$.}
	\label{fig:beta2toalpha}
\end{figure}
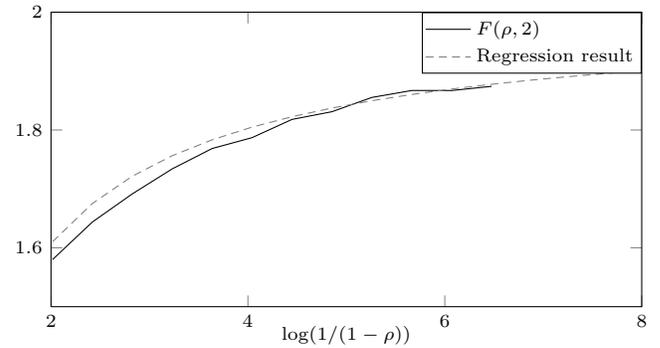

From the results in Figure~\ref{fig:stable} we find by averaging over time $\E_\infty(\Lone{\Q(0)})\approx 4700$ for $\rho = 0.99$, corresponding in view of the definition~\eqref{eq:def-alpha} of the scaling exponent to the estimate $\alpha(2) \approx \log 4700/\log 100 \approx 1.84$. To simplify the discussion define in the sequel
\begin{equation} \label{eq:F}
	F(\rho, \beta) = \frac{\log \E_\infty(\Lone{\Q(0)})}{\log(1/(1-\rho))}, \quad \beta \in (0,\infty], \rho < 1.
\end{equation}
Then, the scaling exponent $\alpha(\beta)$ is defined
in~\eqref{eq:def-alpha} via a limiting procedure, namely
$\alpha(\beta) = F(1-,\beta) = \lim_{\rho \uparrow 1} F(\rho, \beta)$.
Using the results of Figure~\ref{fig:stable} to estimate $\alpha(2)$
amounts to using the approximation $F(1-,2) \approx F(0.99,2)$.
In order to check whether this approximation is valid,
we performed the same simulation for different values of~$\rho$.
In Figure~\ref{fig:beta2toalpha} the value of $F(\rho, 2)$ is plotted
for different values of $\rho \in (0.87,0.999)$. $F(\rho, 2)$ is
plotted versus $\log(1/(1-\rho))$ in order to ``dilate'' time
around the value $\rho = 1$ that we are interested in
(the quantity $\beta \alpha(\beta)$ is plotted versus~$1/\beta$
in Figure~\ref{fig:betaalpha} for the same reason),
and also because it is natural to regress $F(\rho, 2)$, as function
of~$\rho$, against $\log(1/(1-\rho))$ (see forthcoming discussion).

Figure~\ref{fig:beta2toalpha} shows that the limit $F(1-,2)$ seems to exist, but $F(\rho,2)$ is still significantly increasing for $\rho = 0.999$. Thus $F(0.999, 2)$, and in particular $F(0.99, 2)$, cannot be used as an estimate of $\alpha(2)$.
It is numerically difficult to run a simulation for even higher values of~$\rho$, and so to circumvent this problem we use the simulation results displayed in Figure~\ref{fig:beta2toalpha} to find the asymptotic value of $F(\rho, 2)$. To do so, we use the approximation~\eqref{eq:approx} to infer the form of $F(\rho, \beta)$, namely
\begin{equation} \label{eq:regression}
	F(\rho,\beta) \approx \alpha(\beta)  + \frac{\log C}{\log(1/(1-\rho))},
\end{equation}
which suggests, as mentioned above, to regress $F(\rho, 2)$ against $a + b/x$ in the scale $x = \log(1/(1-\rho))$. We performed this regression for the curve displayed in Figure~\ref{fig:beta2toalpha} and found an optimal value $a = 1.9984 \approx \alpha(2)$.

\begin{figure}[htbp]
	\vspace{2mm}
    \begin{tikzpicture}%
\begin{axis}[
font=\scriptsize,
width=\columnwidth+1cm, height=6.5cm,
    xmin=0, xmax=5,
    ymin=0, ymax=10,
    every axis x label/.style=
        {at={(ticklabel cs:0.5)},anchor=center},
    scaled ticks=false,
	xlabel={$\beta$},
    legend cell align=left,
    legend style={at={(1,1)}, anchor=north east, inner xsep=1pt, inner ysep=1pt}]]

\legend{%
	Estimated $\alpha(\beta)$ for $R=2$\\
	Estimated $\alpha(\beta)$ for $R=3$\\
	Estimated $\alpha(\beta)$ for $R=5$\\
	$1/\beta$\\%
}%
\addplot[color=black] plot coordinates {
	(0.100000,	9.996153)
	(0.200000,	4.961067)
	(0.250000,	3.997401)
	(0.300000,	3.326422)
	(0.350000,	2.830031)
	(0.400000,	2.464113)
	(0.450000,	2.188873)
	(0.500000,	1.980168)
	(0.550000,	1.825812)
	(0.600000,	1.721273)
	(0.800000,	1.505995)
	(0.900000,	1.595980)
	(0.950000,	1.705807)
	(0.980000,	1.749895)
	(1.000000,	1.730729)
	(1.000000,	1.748941)
	(1.020000,	1.831359)
	(1.050000,	1.883366)
	(1.100000,	1.860946)
	(1.200000,	1.936386)
	(1.600000,	1.985616)
	(2.000000,	1.998408)
    (3.000000,  2.004374)
	(5.000000,	1.995939)
};
\addplot[color=red] plot coordinates {
	(0.100000,	9.991938)
	(0.200000,	4.968370)
	(0.250000,	3.984266)
	(0.300000,	3.313458)
	(0.350000,	2.872340)
	(0.400000,	2.548786)
	(0.450000,	2.307618)
	(0.500000,	2.127409)
	(0.550000,	1.982591)
	(0.600000,	1.787239)
	(0.600000,	1.765567)
	(0.700000,	1.662231)
	(0.800000,	1.699948)
	(0.800000,	1.706641)
	(0.900000,	1.822516)
	(0.950000,	1.816260)
	(1.000000,	1.813378)
	(1.000000,	1.870033)
	(1.050000,	1.926743)
	(1.100000,	1.969639)
	(1.200000,	1.948308)
	(1.600000,	2.004670)
	(2.000000,	1.998570)
    (3.000000,  1.993457)
	(5.000000,	1.991249)
};
\addplot[color=green] plot coordinates {
	(0.100000,	9.996092)
	(0.200000,	4.993194)
	(0.250000,	4.077599)
	(0.300000,	3.360806)
	(0.350000,	2.852581)
	(0.400000,	2.514373)
	(0.450000,	2.275854)
	(0.500000,	2.112647)
	(0.550000,	1.994263)
	(0.600000,	1.819149)
	(0.700000,	1.656443)
	(0.800000,	1.808673)
	(0.900000,	1.817516)
	(0.950000,	1.762624)
	(1.000000,	1.940740)
	(1.050000,	1.890330)
	(1.100000,	1.934851)
	(1.200000,	1.997691)
	(1.600000,	1.981375)
	(2.000000,	2.004651)
    (3.000000,  1.995208)
	(5.000000,	1.967036)
};
\addplot[color=gray,densely dashed] plot coordinates {
	(0.100000,	10.000000)
	(0.200000,	5.000000)
	(0.250000,	4.000000)
	(0.300000,	3.333333)
	(0.350000,	2.857143)
	(0.400000,	2.500000)
	(0.450000,	2.222222)
	(0.500000,	2.000000)
	(0.550000,	1.818182)
	(0.600000,	1.666667)
	(0.700000,	1.428571)
	(0.800000,	1.250000)
	(0.900000,	1.111111)
	(0.950000,	1.052632)
	(1.000000,	1.000000)
	(1.050000,	0.952381)
	(1.100000,	0.909091)
	(1.200000,	0.833333)
	(1.600000,	0.625000)
	(2.000000,	0.500000)
    (3.000000,  0.333333)
	(5.000000,	0.200000)
};
\end{axis}
\end{tikzpicture}%
	\vspace{-7mm}
	\caption{$\alpha(\beta)$ vs.\ $\beta$ for $R \in \{2,3,5\}$.}
	\label{fig:betatoalpha}
\end{figure}
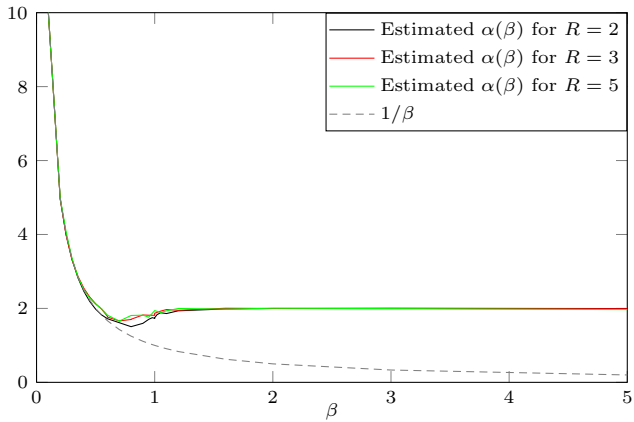

Applying the same approach, we can find an estimate for $\alpha(\beta)$
for any value of~$\beta$.
The results for $R=2$, 3 and~5 are given in Figure~\ref{fig:betatoalpha},
and confirm that $R$ does not seem to influence $\alpha(\beta)$.
Further, the approximation $\alpha(\beta) \approx 2$ appears to be very
good for any $\beta > 1.2$, namely, the estimated value of $\alpha(\beta)$
is at most $3\%$ away from~2 for any $\beta > 1.2$ and any $R = 2,3,5$,
in very good agreement with the main result~\ref{mr:large-beta}.
In Section~\ref{subsub:intermediate-beta} we will discuss in greater
detail the case where $\beta > 1$ is close to~1.

\begin{figure}[htbp]
	\vspace{2mm}
    \begin{tikzpicture}%
\begin{axis}[
font=\scriptsize,
width=\columnwidth+0.9cm, height=5.5cm,
    xmin=1, xmax=10,
    ymin=0, ymax=2,
    every axis x label/.style=
        {at={(ticklabel cs:0.5)},anchor=center},
    scaled ticks=false,
	xlabel={$1/\beta$},
    legend cell align=left,
    legend style={at={(1,1)}, anchor=north east, inner xsep=1pt, inner ysep=1pt}]]

\legend{%
	Estimated $\beta \alpha(\beta)$ for $R=2$\\
	Estimated $\beta \alpha(\beta)$ for $R=3$\\
	Estimated $\beta \alpha(\beta)$ for $R=5$\\%
}%
\addplot[color=black] plot coordinates {
	(10.000000,	0.999615)
	(5.000000,	0.992213)
	(4.000000,	0.999350)
	(3.333333,	0.997926)
	(2.857143,	0.990511)
	(2.500000,	0.985645)
	(2.222222,	0.984993)
	(2.000000,	0.990084)
	(1.818182,	1.004197)
	(1.666667,	1.032764)
	(1.250000,	1.204796)
	(1.111111,	1.436382)
	(1.052632,	1.620517)
	(1.020408,	1.714897)
	(1.000000,	1.730729)
	(1.000000,	1.748941)
	(0.980392,	1.867986)
	(0.952381,	1.977535)
	(0.909091,	2.047041)
	(0.833333,	2.323663)
	(0.625000,	3.176985)
	(0.500000,	3.996816)
    (0.333333,  6.013122)
	(0.200000,	9.979697)
};
\addplot[color=red] plot coordinates {
	(10.000000,	0.999194)
	(5.000000,	0.993674)
	(4.000000,	0.996066)
	(3.333333,	0.994038)
	(2.857143,	1.005319)
	(2.500000,	1.019514)
	(2.222222,	1.038428)
	(2.000000,	1.063704)
	(1.818182,	1.090425)
	(1.666667,	1.072344)
	(1.666667,	1.059340)
	(1.428571,	1.163562)
	(1.250000,	1.359958)
	(1.250000,	1.365313)
	(1.111111,	1.640265)
	(1.052632,	1.725447)
	(1.000000,	1.813378)
	(1.000000,	1.870033)
	(0.952381,	2.023080)
	(0.909091,	2.166603)
	(0.833333,	2.337969)
	(0.625000,	3.207472)
	(0.500000,	3.997140)
    (0.333333,  5.980371)
	(0.200000,	9.956246)
};
\addplot[color=green] plot coordinates {
	(10.000000,	0.999609)
	(5.000000,	0.998639)
	(4.000000,	1.019400)
	(3.333333,	1.008242)
	(2.857143,	0.998403)
	(2.500000,	1.005749)
	(2.222222,	1.024134)
	(2.000000,	1.056323)
	(1.818182,	1.096845)
	(1.666667,	1.091490)
	(1.428571,	1.159510)
	(1.250000,	1.446938)
	(1.111111,	1.635764)
	(1.052632,	1.674492)
	(1.000000,	1.940740)
	(0.952381,	1.984847)
	(0.909091,	2.128336)
	(0.833333,	2.397229)
	(0.625000,	3.170200)
	(0.500000,	4.009301)
    (0.333333,  5.985624)
	(0.200000,	9.835181)
};
\end{axis}
\end{tikzpicture}%
	\vspace{-7mm}
	\caption{$\beta\alpha(\beta)$ vs.\ $1/\beta$ for $R \in \{2,3,5\}$.}
	\label{fig:betaalpha}
\end{figure}
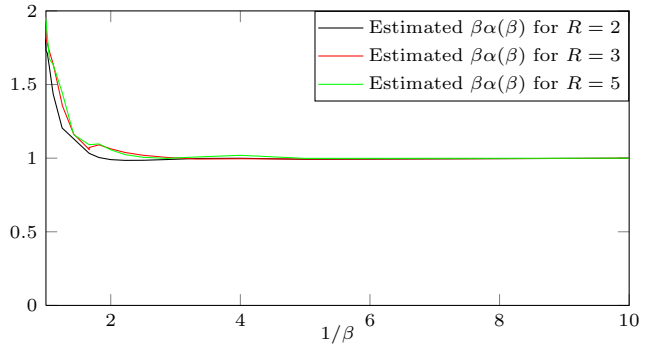

Finally, for small values of~$\beta$, we observe that $\alpha(\beta)$
is close to the lower bound~$1/\beta$.
To see how close $\alpha(\beta)$ is to the lower bound, we made a plot
for $\beta\alpha(\beta)$ in Figure~\ref{fig:betaalpha}.
We observe that $\beta\alpha(\beta) \in (0.98,1.02)$ for any $\beta<0.45$,
which supports our claim made in the main result~\ref{mr:small-beta} and the heuristic argument given in Section~\ref{sub:small-beta}.

\section{Analysis of the case $\beta \in (1,\infty]$}
\label{asan}

\subsection{The case $\beta = +\infty$}
\label{infinitebeta}

We study here in detail the case $\beta {=} \infty$.
In Section~\ref{sub:perturbation} we explain why the case $\beta > 1$
can be seen as a perturbation of this case.
In the remainder of this subsection we assume that $\beta {=} \infty$, so
that $\A(T^*) = {\bf{0}}$ and $T^* = \inf\{k \geq 0: \A(k) = {\bf{0}}\}$.

\subsubsection{Control of $T^*$}

We first prove that $T^* \approx \tau_{\max}$, i.e., the time at which
the $R$~independent random walks $A_r$ simultaneously hit~0, is close to
the first time at which each process has visited~0 at least once.

\begin{lemma}\label{lemma:T^*}
	$\sup_{\a, \rho} \E_\a\left(T^* - \tau_{\max} \right)$ is finite, where the supremum is taken over $\a \in \N^R$ and $\rho \leq 1$.
\end{lemma}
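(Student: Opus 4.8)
The plan is to apply the strong Markov property at the stopping time $\tau_{\max}$ and thereby reduce the statement to a uniform bound on the expected time needed to \emph{synchronise} the $R$ active queues at $\mathbf{0}$. Note first that $\tau_{\max}\leq T^*$, since at time $T^*$ every queue is empty and has therefore visited $0$; hence $T^*=\tau_{\max}+\inf\{k\geq0:\A(\tau_{\max}+k)=\mathbf{0}\}$. Writing $T^{**}=\inf\{k\geq0:\A(k)=\mathbf{0}\}$ and $h(\mathbf b)=\E_{\mathbf b}[T^{**}]$, the strong Markov property gives $\E_\a(T^*-\tau_{\max})=\E_\a[\,h(\A(\tau_{\max}))\,]$, and the goal is to bound this uniformly over $\a\in\N^R$ and $\rho\leq1$. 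Recall that for $\beta=+\infty$ each $A_r$ is simply the workload recursion $A_r(k+1)=(A_r(k)-1)^++\xi_{k+1}$, a positive recurrent chain whose increments are bounded below by $-1$.

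The argument rests on three uniform-in-$\rho$ facts about a single queue $A_r$. (i) Since $\E\xi=\rho/2\leq1/2$, the process $k\mapsto A_r(k\wedge\tau_r)+(1-\E\xi)(k\wedge\tau_r)$ is a martingale and there is no downward overshoot at $0$, so optional stopping gives $\E_a(\tau_r)=a/(1-\E\xi)\leq2a$. (ii) Started from $0$, $A_r$ is stochastically increasing in $k$ and converges to its stationary law, whose mean $\mu_\rho$ is bounded by some $\bar\mu<\infty$ uniformly in $\rho\leq1$ (using $\E\xi\leq1/2$ and the uniform bound on $\mathrm{Var}(\xi)$); hence $\sup_{m\geq0}\E_0(A_r(m))=\mu_\rho\leq\bar\mu$. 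As a consequence $\E_\a(A_r(\tau_{\max}))\leq\bar\mu$: after $\tau_r$ the queue $A_r$ restarts afresh from $0$, and the elapsed time $\tau_{\max}-\tau_r$ is a function of the other (independent) queues and of $A_r$ up to $\tau_r$ only, hence independent of this fresh restart. (iii) $\P(\xi=0)\geq1-\E\xi\geq1/2$, so from any configuration with all coordinates at most $K$ the event that every queue draws $\xi=0$ for $K$ consecutive slots --- which forces $\A$ down to $\mathbf{0}$ within $K$ steps and keeps it there --- has probability at least $2^{-KR}$.

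These are assembled by a renewal argument. Fix $K=\lceil2R\bar\mu\rceil$ and set $\theta_i=\tau_{\max}+iK$. The ``restart from $0$'' reasoning of (ii) applied at time $\theta_i$ gives $\E_\a(\Lone{\A(\theta_i)})=\sum_{r\in\Rcal}\E_\a(A_r(\theta_i))\leq R\bar\mu$ for every $i\geq0$, hence $\P_\a(\Linfty{\A(\theta_i)}>K)\leq R\bar\mu/K\leq1/2$ by Markov's inequality. Combining this with (iii), on the event $\{T^*>\theta_i\}$ we have, conditionally on $\mathcal{F}_{\theta_i}$, probability at least $\tfrac12\,2^{-KR}$ that $T^*\leq\theta_{i+1}$. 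One then wants to deduce that $\P_\a(T^*>\theta_i)$ decays at a uniform geometric rate, which yields $\E_\a(T^*-\tau_{\max})\leq K\bigl(\E_\a[\inf\{i:T^*\leq\theta_{i+1}\}]+1\bigr)\leq C(R,\bar\mu)$.

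The last deduction is the delicate point, and I expect it to be the main obstacle: the bound $\P_\a(\Linfty{\A(\theta_i)}>K)\leq1/2$ only controls the \emph{mean} of $\A(\theta_i)$, so the geometric decay of $\P_\a(T^*>\theta_i)$ can a priori stall once this probability reaches the scale $R\bar\mu/K$. To close the gap one can either invoke a slightly stronger moment bound on the stationary law of $A_r$ (making $\P_\a(\Linfty{\A(\theta_i)}>K)$ genuinely small rather than merely $\leq1/2$), or use a two-timescale version: at each attempt, first wait for the cheap event $\{\Linfty{\A}\leq K\}$ --- whose expected waiting time is controlled because each queue returns to $[0,K]$ in expected time at most $2(\,\cdot-K)^+$ by the martingale of (i) and then spends most of its time below $K$ --- and only then run the $K$-step synchronisation of (iii), so that the number of attempts is a genuinely geometric variable with parameter bounded below uniformly in $\rho$. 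Either route is routine but must be carried out keeping track of uniformity in $\rho\leq1$ and in the arbitrary initial state $\a$.
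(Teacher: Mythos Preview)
Your strategy matches the paper's: reduce by the strong Markov property at $\tau_{\max}$, then run a renewal argument with geometrically many ``synchronisation attempts.'' The gap you flag is real and is exactly where your scheme and the paper's diverge. With fixed-length cycles $\theta_i=\tau_{\max}+iK$, the unconditional bound $\E_\a(\Lone{\A(\theta_i)})\leq R\bar\mu$ does \emph{not} survive conditioning on $\{T^*>\theta_i\}$; as you note, the recursion this yields only drives $\P_\a(T^*>\theta_i)$ down to the level $R\bar\mu/K$, which is not summable. Your Fix~1 (higher moments) does nothing for this conditioning issue, and your Fix~2 points in the right direction but is an awkward reinvention of what the paper does more cleanly.

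The paper's device is to let the renewal times be random: set $\sigma(0)=0$ and let $\sigma(k+1)$ be the first time after $\sigma(k)$ by which \emph{every} queue has visited $0$ at least once (so $\sigma(1)=\tau_{\max}$). Between $\sigma(k)$ and $\sigma(k+1)$ each $A_r$ passes through $0$; at that instant one couples $A_r$ with an independent \emph{stationary} copy that dominates it thereafter, giving $A_r(\sigma(k))\leq X_r(k)$ with the $X_r(k)$ i.i.d.\ stationary across both $r$ and $k$. The trials are then genuinely i.i.d.: starting from the stationary vector $\varrho=(X_1,\ldots,X_R)$, the event $\{\A(\tau_{\max})=\mathbf{0}\}$ has a fixed probability $p>0$, and each cycle has finite expected length $\E_\varrho(\tau_{\max})\leq R\,\E(X)/(1-\E\xi)<\infty$ (the stationary mean being finite since $\xi$ has finite variance). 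This sidesteps the conditioning problem entirely, because the dominating variables are fresh and independent of the past, so no ``stalling'' can occur. The paper also first reduces to $\rho=1$ by monotonicity of $T^*$ in $\rho$, which spares you from tracking uniformity by hand.
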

\begin{proof}
	By monotonicity of~$T^*$ in~$\rho$, it is enough to show
that $\sup_\a \E_\a(T^* - \tau_{\max})$ is finite for $\rho = 1$.
So assume in the remainder of the proof $\rho = 1$: then each active queue is stable (for that we only need $\rho < 2$) and in particular, $T^*$ is almost surely finite. Thus the strong Markov property at time $\tau_{\max} \leq T^*$ gives
	\[ \E_\a\left(T^* - \tau_{\max} \right) = \E_\a \left[ \E_{\A(\tau_{\max})}(T^*) \right]. \]
	When $A_r$ hits $0$ for the first time at time $\tau_r$, we can couple it with a stationary version $\tilde A_r$ of $A_1$ in such a way that $\tilde A_r(k) \geq A_r(k)$ for every $k \geq \tau_r$ (this is the usual coupling between two processes starting from different initial conditions and sharing the same stochastic primitives). By monotonicity and since queues are independent, we obtain
	\[ \E_\a\left(T^* - \tau_{\max} \right) \leq \E_\varrho(T^*), \]
	where for any $r \geq 2$, we define $\varrho = (X_1, \ldots, X_R) \in \N^R$ with $(X_i, i \geq 1)$ i.i.d.\ with common distribution the stationary distribution $X$ of $A_r$.

	Define $\sigma(0) = 0$ and for $k \geq 1$, $\tau(k) = \tau_{\max} \circ \theta_{\sigma(k-1)}$ and $\sigma(k) = \sigma(k-1) + \tau(k)$, where $\theta$ is the shift operator. In words, $\sigma(k+1)$ is the smallest time after $\sigma(k)$ such that every queue will have visited~0 at least once after (and including) time $\sigma(k)$. Note that if $\A(\sigma(k)) = {\bf{0}}$ for some $k$, then $\sigma(k') = \sigma(k)$ for $k' \geq k$. In particular, the limit $\sigma(\infty)$ as $k \to +\infty$ exists, and is by construction equal to~$T^*$. Since $\tau(k) = \sigma(k) - \sigma(k-1)$ this can be rewritten as $T^* = \sum_{k \geq 1} \tau(k)$.
	
	By construction, $A_r$ takes the value~0 at least once between $\sigma(k)$ and $\sigma(k+1)$: at this time we can couple it with a stationary version of itself that stays above it. Following this stationary process from this time until the time $\sigma(k+1)$, this gives the existence of i.i.d.\ random variables $(X_r(k), r \in \Rcal, k \geq 0)$ with common distribution $X$ such that $A_r(\sigma(k)) \leq X_r(k)$ for every $r$ and $k$.

	Considering the time needed for every queue to visit~0 at least once starting from $(X_r(k), r \in \Rcal)$ gives an upper bound on $\tau(k)$. Each time we do so, there is a probability $p = \P_\varrho(\A(\tau_{\max}) = {\bf{0}}) \in (0,1)$ that we reach ${\bf{0}}$ at time $\tau_{\max}$. Thus, $T^*$ is (stochastically) upper bounded by $\tau'_1 + \cdots + \tau'_G$ with $G$ a geometric random variable with parameter $1-p$ and $(\tau'_i)$ i.i.d., independent from $G$ and with common distribution $\tau_{\max}$ under $\P_\varrho$. This implies
	\[ \E_\varrho\left( T^* \right) \leq \E \left( \tau'_1 + \cdots + \tau'_G \right) = \frac{\E_\varrho \left( \tau_{\max} \right)}{\P_\varrho \left( \A(\tau_{\max}) = {\bf{0}} \right)}. \]
	Since $\tau_{\max} \leq \tau_1 + \cdots + \tau_R$, $\E_{a_r}(\tau_r) = a_r/(1-\E\xi)$ and $X$ has finite mean (because $\xi$ is assumed to have finite second moment), $\E_\varrho(\tau_{\max})$ is finite and so the proof is complete.
\end{proof}

The previous lemma justifies the approximation $T^* \approx \tau_{\max}$,
and to further control $\tau_{\max}$, we will use the fact that
$(\tau_r, r \in \Rcal)$ under $\P_\a$ is equal in distribution
to $(V_r(a_r), r \in \Rcal)$, where $(V_r, r \in \Rcal)$ are i.i.d.\
random walks started at~0 and with step distribution~$\delta$,
equal in distribution to $\tau_1$ under $\P_1$ (i.e., $\delta$ is the time
needed for a random walk with step distribution $\xi-1$ to go from~1 to~0).
Then $\delta$ has finite mean $1/(1-\E\xi)$ and also finite variance,
which we denote by~$\nu$.
To control the maximum of random walks, we will use the following result.

\begin{lemma} \label{lemma:bound-max}
	If $(W_r)$ are i.i.d.\ random walks started at~0 with step distribution having mean $m$ and variance $w$, then for any $\x = (x_r) \in \N^R$ it holds that
	\[ \E \left( \max_r W_r(x_r) \right) \leq m \Linfty{\x} + R (w \Linfty{\x})^{1/2}. \]
\end{lemma}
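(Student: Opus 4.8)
The plan is to center each random walk at its mean and control the remaining fluctuation in $L^1$ through its $L^2$ norm, which is where the $\Linfty{\x}^{1/2}$ scaling comes from. Write $M_r(k) = W_r(k) - mk$ for each $r$ and $k \geq 0$. By hypothesis each $(M_r(k), k \geq 0)$ is a mean-zero random walk started at~0 whose increments have variance~$w$, so that $\E[M_r(k)^2] = wk$ for every $k$. The decomposition $W_r(x_r) = m x_r + M_r(x_r)$ is the starting point.

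First I would apply the elementary inequality $\max_r (y_r + z_r) \leq \max_r y_r + \max_r z_r$ with $y_r = m x_r$ and $z_r = M_r(x_r)$. Since $m \geq 0$ (in all the applications of interest $m$ equals either $\E\xi$ or $1/(1-\E\xi)$, both nonnegative) and $x_r \leq \Linfty{\x}$ for every~$r$, one has $\max_r m x_r \leq m\Linfty{\x}$, hence $\max_r W_r(x_r) \leq m\Linfty{\x} + \max_r M_r(x_r)$. Taking expectations, it only remains to show $\E[\max_r M_r(x_r)] \leq R (w\Linfty{\x})^{1/2}$. For this I would pass to the crude union bound $\max_r M_r(x_r) \leq \sum_{r=1}^R \Linfty{M_r(x_r)}$, so that $\E[\max_r M_r(x_r)] \leq \sum_{r=1}^R \E\Linfty{M_r(x_r)}$, and then estimate each term by Cauchy--Schwarz (or Jensen): $\E\Linfty{M_r(x_r)} \leq (\E[M_r(x_r)^2])^{1/2} = (w x_r)^{1/2} \leq (w\Linfty{\x})^{1/2}$. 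Summing over the $R$ indices and combining with the previous display yields the claimed bound.

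I do not expect any real obstacle here: the argument is a one-line application of the central-limit-theorem scale of a random walk, and the only loss is the union bound over~$r$, which is responsible for the constant~$R$ in front of the square-root term. Since $R$ is a fixed finite model parameter this is harmless for the subsequent analysis, where what matters is the order $\Linfty{\x}^{1/2}$ of the correction to the linear term $m\Linfty{\x}$ --- precisely the square-root penalty underlying the lingering effect.
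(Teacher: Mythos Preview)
Your proof is correct and essentially identical to the paper's: the paper also centers by writing $W_r(x_r) = m x_r + (w x_r)^{1/2} Y_r$ with $Y_r$ of mean~0 and variance~1, bounds $\max_r$ by $m\Linfty{\x} + (w\Linfty{\x})^{1/2}\Lone{\mathbf{Y}}$, and concludes via $\E\Linfty{Y_r}\leq 1$. Your remark that the step $\max_r m x_r \leq m\Linfty{\x}$ tacitly uses $m\geq 0$ is apt; the paper leaves this implicit but it holds in every application of the lemma.
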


\begin{proof}
	Defining $Y_r = (W_r(x_r) - m x_r) / (w x_r)^{1/2}$, we have $\max_r W_r(x_r) = \max_r (m x_r + (w x_r)^{1/2} Y_r)$ so that
	\[ \max_r W_r(x_r) 
	\leq m \Linfty{\x} + (w \Linfty{\x})^{1/2} \Lone{{\bf{Y}}} \]
	which proves the result since $\E(\Lone{{\bf{Y}}}) = \sum_r \E(\Linfty{Y_r}) \leq R$.
\end{proof}

\subsubsection{Proof of $\alpha(\infty) = 2$} Before proving the main result~\ref{mr:large-beta} in the case $\beta = +\infty$, we first need to prove stability and finiteness of the stationary mean.

\begin{prop} \label{prop:stab}
	Assume that $\beta = +\infty$ and that $\rho < 1$. Then $\Q$ is positive recurrent and $\E_\infty(\Lone{\Q(0)}) < +\infty$.
\end{prop}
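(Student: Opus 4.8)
The plan is to establish both conclusions at once via a single Foster--Lyapunov drift inequality, the real point being the choice of Lyapunov function. The obvious candidate $V(\q)=\Lone{\q}$ does \emph{not} work: from~\eqref{eq:Q} with $\beta=+\infty$ we have $\Q^a(1)=\q^i+\S(T^*)$ and $\Q^i(1)={\bf 0}$, and since $\Lone{\S(\cdot)}$ is a random walk with drift $R\E\xi$ independent of $T^*$, this gives $\E_\q[\Lone{\Q(1)}]=\Lone{\q^i}+R\E\xi\,\E_{\q^a}(T^*)$; as $\E_{\q^a}(T^*)\approx\Linfty{\q^a}/(1-\E\xi)$ this only has negative drift when $R\E\xi/(1-\E\xi)<1$, i.e.\ $\rho<2/(R+1)$. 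The right observation is that one switch contracts the \emph{largest} active queue by the factor $c:=\E\xi/(1-\E\xi)=\rho/(2-\rho)<1$, \emph{uniformly in $R$}, but converts inactive content into active content. I would therefore take
\[ V(\q) = \Linfty{\q^a} + 2\Linfty{\q^i}, \]
the weight $2>1$ on the inactive part being exactly what forces the potential to drop when inactive mass turns active.

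Next I would bound the one-step drift of $V$. Since $\Q^i(1)={\bf 0}$, only the active part survives: $\Linfty{\Q^a(1)}=\Linfty{\q^i+\S(T^*)}\le\Linfty{\q^i}+\max_r S_r(T^*)$. As $T^*$ and $\S$ are independent, conditioning on $T^*$ and applying Lemma~\ref{lemma:bound-max} to the walks $S_r$ (step mean $\E\xi$, variance $v$) at the point $(t,\dots,t)$ gives $\E[\max_r S_r(t)]\le\E\xi\,t+R(vt)^{1/2}$, whence by Jensen (concavity of $t\mapsto\E\xi\,t+R(vt)^{1/2}$),
\[ \E_\q\bigl[\Linfty{\Q^a(1)}\bigr] \le \Linfty{\q^i} + \E\xi\,\E_{\q^a}(T^*) + R\bigl(v\,\E_{\q^a}(T^*)\bigr)^{1/2}. \]
To control $\E_{\q^a}(T^*)$ I would combine Lemma~\ref{lemma:T^*} (so $\E_{\q^a}(T^*)\le\E_{\q^a}(\tau_{\max})+K$ with $K<\infty$) with the distributional identity $\tau_{\max}\stackrel{d}{=}\max_r V_r(q^a_r)$ under $\P_{\q^a}$ and Lemma~\ref{lemma:bound-max} again (walks $V_r$, step mean $1/(1-\E\xi)$, variance $\nu$), getting $\E_{\q^a}(T^*)\le\Linfty{\q^a}/(1-\E\xi)+R(\nu\Linfty{\q^a})^{1/2}+K$. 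Substituting, the leading term of $\E\xi\,\E_{\q^a}(T^*)$ is $c\Linfty{\q^a}$ and the remainder is $O(\Linfty{\q^a}^{1/2})$ plus a constant, so for finite constants $C_1,C_2$ (depending on $R,v,\nu,K,\rho$),
\[ \E_\q[V(\Q(1))] - V(\q) \le -(1-c)\Linfty{\q^a} + C_1\Linfty{\q^a}^{1/2} + C_2 - \Linfty{\q^i}. \]

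Because $c<1$, the right-hand side is $\le-\tfrac{1-c}{2}\Linfty{\q^a}$ once $\Linfty{\q^a}$ is large, and $\le-\tfrac12\Linfty{\q^i}+\text{const}$ once $\Linfty{\q^i}$ is large; together with $\Lone{\q}\le R(\Linfty{\q^a}+\Linfty{\q^i})$ this yields, for a \emph{finite} set $F=\{\Linfty{\q^a}\le m_0,\ \Linfty{\q^i}\le n_0\}$ and some $\varepsilon>0$, $b<\infty$,
\[ \E_\q[V(\Q(1))] - V(\q) \le -\varepsilon\Lone{\q} + b\,\indicator{\q\in F}, \qquad \sup_{\q\in F}\E_\q[V(\Q(1))]<\infty. \]
The chain is irreducible on its essential class (from any state the active queues can all drain to $0$ and trigger a switch, as $\P(\xi=0)>0$ since $\E\xi=\rho/2<1$, and from ${\bf 0}$ it reaches the whole class), so this inequality is Foster's criterion with the finite set $F\cup\{{\bf 0}\}$ (the drift being $\le-\varepsilon$ off it), giving positive recurrence, and then the standard drift criterion for finite stationary moments gives $\E_\infty(\Lone{\Q(0)})=\E_\pi[\Lone{\Q}]\le b/\varepsilon<\infty$. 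The main obstacle is precisely the choice of $V$: $\Lone{\cdot}$ fails for general $\rho<1$, and one must identify both the $R$-independent contraction factor $c=\rho/(2-\rho)$ and the heavier weighting of the inactive queues; once that is in place, the rest is bookkeeping with Lemmas~\ref{lemma:T^*} and~\ref{lemma:bound-max}, plus the routine check that the exceptional set can be taken finite even though $V$ only sees $\Linfty{\q^a}$ and $\Linfty{\q^i}$ — which works precisely because $\q^i$ is reset to ${\bf 0}$ after one step.
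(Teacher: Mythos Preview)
Your argument is correct and the overall machinery (Lemmas~\ref{lemma:T^*} and~\ref{lemma:bound-max}, the bound $\E_{\q^a}(T^*)\le\Linfty{\q^a}/(1-\E\xi)+R(\nu\Linfty{\q^a})^{1/2}+K$) is exactly what the paper uses. The genuine difference is in how the Lyapunov function is set up. The paper takes the \emph{unweighted} $\Phi(\q)=\Linfty{\q^a}+\Linfty{\q^i}$ and, because the resulting one-step drift bound depends only on $\q^a$, is forced to look at the \emph{two-step} drift $\E_\q[\Phi(\Q(2))-\Phi(\q)]$ via the Markov property, then invokes the geometric Lyapunov criterion of Gamarnik--Zeevi to conclude both positive recurrence and finiteness of the stationary mean. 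Your weighted choice $V(\q)=\Linfty{\q^a}+2\Linfty{\q^i}$ exploits the fact that inactive mass becomes active after one switch (so the potential sheds the extra weight), and this buys you a one-step drift that already dominates $-\varepsilon\Lone{\q}$; you can then finish with the plain Foster criterion plus the standard moment bound $\E_\pi[\Lone{\Q}]\le b/\varepsilon$, without the two-step computation or the external reference. The paper's route keeps the Lyapunov function symmetric and defers the work to the two-step estimate; yours front-loads the asymmetry into $V$ and is more self-contained.

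One minor comment: your display $\E_\q[V(\Q(1))]-V(\q)\le-(1-c)\Linfty{\q^a}+C_1\Linfty{\q^a}^{1/2}+C_2-\Linfty{\q^i}$ already gives, after absorbing the square-root term into half of the linear one, a bound $\le-\varepsilon(\Linfty{\q^a}+\Linfty{\q^i})+C_3$ valid for \emph{all} $\q$, so the passage to $-\varepsilon\Lone{\q}+b\indicator{F}$ is immediate and you do not actually need to split into the two cases ``$\Linfty{\q^a}$ large'' and ``$\Linfty{\q^i}$ large''.
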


\begin{proof}
	Let $\Phi(\q) = |\q^a| + |\q^i|$: to prove Proposition~\ref{prop:stab}, it is enough to prove that
	\begin{equation} \label{eq:geom-lyapunov}
		\lim_{K \to +\infty} \sup_{\q: \Phi(\q) \geq K} \left( \frac{\E_\q[\Phi(\Q(2)) - \Phi(\q)]}{\Phi(\q)} \right) < 0.
	\end{equation}
	Indeed, this shows that $\Phi$ is a Lyapunov function,
which implies positive recurrence of $\Q$ using for instance the
Foster-Lyapunov criterion.
But it shows more than that: in the terminology of~\cite{Gamarnik06:0}
it implies that $\Phi$ is a geometric Lyapunov function,
and Theorem~5 in~\cite{Gamarnik06:0} shows
that~\eqref{eq:geom-lyapunov} implies that $\E_\infty[\Phi(\Q(0))]$,
and in particular $\E_\infty(\Lone{\Q(0)})$, is finite.
Thus we only have to prove~\eqref{eq:geom-lyapunov}.

	Since $\Linfty{\x + {\bf{y}}} \leq \Linfty{\x} + \Linfty{{\bf{y}}}$,~\eqref{eq:Q} implies that
	\[ \E_\q\left[\Phi(\Q(1))\right] \leq \Linfty{\q^i} + \E_{\q^a}\left(\Linfty{\S(T^*)}\right) + \E_{\q^a}\left(\Linfty{\A(T^*)}\right). \]
	Since $\S$ and $T^*$ are independent, Lemma~\ref{lemma:bound-max} gives that $\E_\a(\Linfty{\S(T^*)}) \leq \E(\xi) \E_\a(T^*) + R \E_\a((vT^*)^{1/2})$ for any $\a \in \N^R$ (recall that $v$ is the variance of~$\xi$, assumed to be finite). Thus after rearranging the terms, we end up with the bound
	\begin{equation} \label{eq:bound-drift}
		\E_\q\left[\Phi(\Q(1)) - \Phi(\q)\right] \leq
-(1-\rho) \E_{\q^a}(T^*) + \Psi(\q^a),
	\end{equation}
	where
	\[ \Psi(\a) = \E_{\a}\left( \Linfty{\A(T^*)} + (1-\E\xi) T^*- \Linfty{\A(0)} + R (vT^*)^{1/2} \right). \]
	We now argue that $\Psi(\a) \leq c [\E_{\a}(T^*)]^{1/2}$ for some finite constant $c$ independent of $\a$. Since we are considering the case $\beta = +\infty$, we have $|\A(T^*)| = 0$ and so we only have to show that $(1-\E\xi) \E_{\a}(T^*) - \Linfty{\a} \leq c[\E_{\a}(T^*)]^{1/2}$. By Lemma~\ref{lemma:T^*}, we have $\E_{\a}(T^*) \leq \E_{\a}(\tau_{\max}) + c'$ for some finite constant $c'$ independent of $\a$. Further, since $\tau_{\max}$ is equal in distribution to $\max_r V_r(a_r)$, Lemma~\ref{lemma:bound-max} gives (recall that $\nu$ is the variance of the step distribution of the $V_r$'s)
	\[ \E_\a(T^*) \leq \frac{\Linfty{\a}}{1-\E\xi} + R (\nu \Linfty{\a})^{1/2} + c', \]
	which implies the existence of the desired constant $c$ such that $\Psi(\a) \leq c [\E_{\a}(T^*)]^{1/2}$. Defining
	\begin{equation} \label{eq:def-Phi}
		\Gamma(\a) = (1-\rho) \E_\a\left(T^*\right) - c \left[\E_\a\left(T^*\right)\right]^{1/2},
	\end{equation}
	\eqref{eq:bound-drift} can be rewritten as $\E_\q\left[\Phi(\Q(1)) - \Phi(\q)\right] \leq -\Gamma(\q^a)$. Using the Markov property and~\eqref{eq:Q}, this gives
	\begin{multline} \label{eq:bound-drift-2}
		\E_\q\left[\Phi(\Q(2)) - \Phi(\q)\right]\\
		\leq - \E_{\q^a} \left( \Gamma(\q^a) + \Gamma(\q^i + \S(T^*)) \right).
	\end{multline}
	When $\Phi(\q) = \Linfty{\q^a} + \Linfty{\q^i}$ is large, at least one of the $2R$ coordinates of $\q$ must be large. Since $\E_\a(T^*) \geq a_r/(1-\E\xi)$ (as a consequence of $T^* \geq \tau_r$ and $\E_a(\tau_r) = a/(1-\E\xi)$), it is not hard to show that
	\[ \lim_{K \to +\infty} \inf_{\q: \Phi(\q) \geq K} \frac{\E_{\q^a} \left( \Gamma(\q^a) + \Gamma(\q^i + \S(T^*)) \right)}{\Phi(\q)} > 0, \]
	which completes the proof of the result.
\end{proof}
	
Now that we have stability and finiteness of the stationary mean, we prove that $\alpha(\infty) = 2$. The proof will make use of the following result.

\begin{lemma} \label{lemma:bound-max-2}
	If $(W_j)$ are $J$ i.i.d.\ random walks started at~0 with non-negative step distribution having mean $m$ and variance $w$, then for any $\x = (x_j) \in \N^J$ it holds that,
	\[ \E \left[ \left( \max_{j} W_j(x_j) \right)^{1/2} \right] \geq (m \Linfty{\x})^{1/2} - (w/m^{3/2}) \Linfty{\x}^{-1/2}. \]
\end{lemma}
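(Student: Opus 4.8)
The plan is to reduce the maximum over the $J$ walks to a single one and then establish a ``reverse Jensen'' lower bound for the expected square root of a non-negative random variable in terms of its mean and variance.

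First I would pick an index $j^*$ with $x_{j^*} = \Linfty{\x}$. Since $W_{j^*}(x_{j^*})$ is one of the terms in the maximum,
\[ \max_j W_j(x_j) \;\geq\; W_{j^*}\bigl(\Linfty{\x}\bigr) \;=:\; Z , \]
and because the step distribution is non-negative and the walks start at $0$, the random variable $Z$ is non-negative, with $\E(Z) = m\Linfty{\x}$ and $\mathrm{Var}(Z) = w\Linfty{\x}$. (We may assume $m>0$; the case $m=0$ is degenerate.) So it suffices to lower bound $\E(Z^{1/2})$ by $(m\Linfty{\x})^{1/2} - (w/m^{3/2})\Linfty{\x}^{-1/2}$, i.e.\ to bound $\E(Z^{1/2})$ from below using only $\E(Z)$ and $\mathrm{Var}(Z)$.

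The key step is the pointwise inequality: for every $\mu > 0$ and every $y \geq 0$,
\[ y^{1/2} \;\geq\; \frac{y}{2\mu^{1/2}} + \frac{\mu^{1/2}}{2} - \frac{(y-\mu)^2}{2\mu^{3/2}} . \]
I would prove this from the algebraic identity $y^{1/2} = \frac{y}{2\mu^{1/2}} + \frac{\mu^{1/2}}{2} - \frac{(y^{1/2}-\mu^{1/2})^2}{2\mu^{1/2}}$ together with the elementary estimate $(y^{1/2}-\mu^{1/2})^2 = (y-\mu)^2/(y^{1/2}+\mu^{1/2})^2 \leq (y-\mu)^2/\mu$, valid since $y \geq 0$. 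Applying the pointwise inequality with $y = Z$ and $\mu = m\Linfty{\x}$ and taking expectations yields
\[ \E(Z^{1/2}) \geq \frac{\E(Z)}{2\mu^{1/2}} + \frac{\mu^{1/2}}{2} - \frac{\mathrm{Var}(Z)}{2\mu^{3/2}} = (m\Linfty{\x})^{1/2} - \frac{w}{2m^{3/2}}\,\Linfty{\x}^{-1/2} , \]
which is even slightly stronger than the claimed bound; combining with the reduction of the first paragraph finishes the proof.

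The lemma is elementary and I do not expect a genuine obstacle. The one point that requires a small idea is the pointwise square-root inequality: a naive second-order Taylor expansion of $y \mapsto y^{1/2}$ does \emph{not} give a global lower bound because the square root is concave rather than convex, so the remainder has the wrong sign; the identity-plus-elementary-estimate route above circumvents this cleanly, and everything else (the reduction via a single term of the maximum, the moment computations for a random walk) is routine.
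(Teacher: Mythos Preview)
Your proof is correct and is essentially the same as the paper's: both reduce to $J=1$ via monotonicity and then bound $\E(Z^{1/2})$ using what amounts to the same second-order inequality. Your estimate $(y^{1/2}-\mu^{1/2})^2 \leq (y-\mu)^2/\mu$ is exactly the paper's claim that $\sup_{u\geq -1} f(u)=1/2$ for $f(u)=(1+u/2-(1+u)^{1/2})/u^2$ (indeed $f(u)=\tfrac12((1+u)^{1/2}+1)^{-2}$), so the two arguments coincide and yield the same constant $w/(2m^{3/2})$.
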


\begin{proof}
	Since $\E(\max_j W_j(x_j)) \geq \max_j \E(W_j(x_j))$ it is enough to prove the result for $J = 1$. Fix $k \geq 0$ and let $Y = (W_1(k) - mk)/(mk)^{1/2}$ and
	\[ f(y) = \frac{1+y/2 - (1+y)^{1/2}}{y^2}, \ y \geq -1. \]
	Since $\E Y = 0$ and $Y \geq -(km)^{1/2}$, we can rewrite after some algebra
	\[ \E \big(W_1(k)^{1/2}\big) = (km)^{1/2} - (km)^{-1/2} \E\left[Y^2 f\big( (km)^{-1/2} Y \big) \right], \]
	and since $\sup f = 1/2$ and $\E(Y^2) = w/m$ this gives the result.
\end{proof}

\begin{theorem}\label{thm:alpha-beta=infty}
	If $\beta = +\infty$, then
	\begin{multline} \label{eq:alpha(infty)}
		0 < \liminf_{\rho \uparrow 1} \left[ (1-\rho)^2 \E_\infty\left(\Lone{\Q(0)}\right) \right]\\
		\leq \limsup_{\rho \uparrow 1} \left[ (1-\rho)^2 \E_\infty\left(\Lone{\Q(0)}\right) \right] < +\infty.
	\end{multline}
	In particular, $\alpha(\infty) = 2$.
\end{theorem}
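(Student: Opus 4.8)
The plan is to establish the two-sided estimate~\eqref{eq:alpha(infty)}; the equality $\alpha(\infty)=2$ then follows immediately from the definition~\eqref{eq:def-alpha}. By Proposition~\ref{prop:stab} the stationary mean is finite, and since $\Q^i(0)={\bf{0}}$ $\P_\infty$-almost surely when $\beta=+\infty$ and the queues are symmetric, $\E_\infty(\Lone{\Q(0)})=R\,m$ with $m:=\E_\infty(A_1(0))$. The argument revolves around two facts already at hand: the exact identity~\eqref{eq:A_1}, namely $m=\E\xi\cdot M$ with $M:=\E_\infty(T^*)$, and the sandwich $\tau_{\max}\le T^*\le\tau_{\max}+C_0$ with $C_0:=\sup_{\a,\rho}\E_\a(T^*-\tau_{\max})<\infty$ from Lemma~\ref{lemma:T^*}. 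It thus suffices to prove $M=\Theta((1-\rho)^{-2})$ uniformly for $\rho$ close to~$1$.

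For the upper bound I would derive a self-bounding inequality for~$M$. Conditioning on $\A(0)$, the variable $\tau_{\max}$ has the law of $\max_r V_r(A_r(0))$, so Lemma~\ref{lemma:bound-max} applied to the $V_r$ (mean $1/(1-\E\xi)$, variance~$\nu$) followed by Jensen's inequality gives $\E_\infty(\tau_{\max})\le(1-\E\xi)^{-1}\E_\infty(\Linfty{\A(0)})+R\nu^{1/2}(\E_\infty\Linfty{\A(0)})^{1/2}$. By the stationarity relation~\eqref{eq:stationarity-RC}, $\A(0)\stackrel{d}{=}\S(T^*)$ with $\S$ and $T^*$ independent, hence $\Linfty{\A(0)}\stackrel{d}{=}\max_r S_r(T^*)$; conditioning on~$T^*$ and using Lemma~\ref{lemma:bound-max} once more (mean $\E\xi$, variance~$v$), then Jensen, yields $\E_\infty(\Linfty{\A(0)})\le\E\xi\,M+Rv^{1/2}M^{1/2}=m+Rv^{1/2}M^{1/2}$. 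Substituting, and using $m=\E\xi M\le M/2$ together with $\E\xi/(1-\E\xi)=\rho/(2-\rho)$, all lower-order $M^{1/2}$ contributions collapse into one constant and one arrives at $\tfrac{2(1-\rho)}{2-\rho}M\le C_0+C\,M^{1/2}$ for some $C$ uniform for $\rho$ near~$1$, which forces $M=O((1-\rho)^{-2})$ and hence $\limsup_{\rho\uparrow1}(1-\rho)^2\E_\infty(\Lone{\Q(0)})<+\infty$.

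For the lower bound one must recover the square-root correction identified in Section~\ref{sub:lingering-effect}. Since $R\ge2$, $T^*\ge\tau_{\max}\ge\tau_1\vee\tau_2=\tfrac12(\tau_1+\tau_2)+\tfrac12|\tau_1-\tau_2|$ and $\E_\infty(\tau_r)=m/(1-\E\xi)$, so $M\ge m/(1-\E\xi)+\tfrac12\E_\infty|\tau_1-\tau_2|$; inserting this into $m=\E\xi M$ and rearranging (with $\E\xi=\rho/2$) gives $\E_\infty|\tau_1-\tau_2|\le C'\,m\,(1-\rho)$ for $\rho$ near~$1$. On the other hand, conditionally on $\A(0)=\a$, $\tau_1-\tau_2$ has the law of $V_1(a_1)-V_2(a_2)$; splitting both walks at $\ell:=a_1\wedge a_2$ expresses this as the independent sum of the symmetric mean-zero variable $V_1(\ell)-V_2(\ell)$ and a further increment, and as $y\mapsto\E|Z+y|$ is minimized at $y=0$ for symmetric~$Z$, one gets $\E(|\tau_1-\tau_2|\mid\a)\ge\E|V_1(\ell)-V_2(\ell)|$, which by a standard central limit theorem estimate exceeds $c\sqrt{\ell}$ once $\ell$ is large, with $c>0$ uniform for $\rho$ near~$1$. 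Hence $\E_\infty|\tau_1-\tau_2|\ge c\,\E_\infty[(A_1(0)\wedge A_2(0))^{1/2}]-O(1)$, and combining the two bounds reduces the lower bound to the concentration-type estimate $\E_\infty[(A_1(0)\wedge A_2(0))^{1/2}]\ge c'\,(\E_\infty A_1(0))^{1/2}-O(1)$: this would give $\sqrt m\ge c''/(1-\rho)$, i.e.\ $m\ge c'''(1-\rho)^{-2}$, hence $\liminf_{\rho\uparrow1}(1-\rho)^2\E_\infty(\Lone{\Q(0)})>0$.

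I expect this last estimate to be the main obstacle: it is precisely the ``concentration-like result of the kind $\E_\infty[A_1(0)^{1/2}]\approx[\E_\infty(A_1(0))]^{1/2}$'' anticipated in Section~\ref{sub:lingering-effect}, and it says that in stationarity the two active queues are, with non-negligible probability, both of order~$m$. Through $A_r(0)=S_r(T^*)$ (with $T^*$ independent of~$\S$ and $S_r(T^*)$ concentrated around $\E\xi\,T^*$ given~$T^*$) this reduces to a concentration statement for~$T^*$ itself, such as $\mathrm{Var}_\infty(T^*)=o((\E_\infty T^*)^2)$, or $\P_\infty(T^*\ge\varepsilon\,\E_\infty T^*)\ge\delta$ for uniform $\varepsilon,\delta>0$. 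A natural route is a second-moment computation: by the Efron--Stein inequality $\mathrm{Var}(\tau_{\max}\mid\A(0))\le\nu\Lone{\A(0)}$ and $\mathrm{Var}(\max_r S_r(k)\mid T^*{=}k)\le vRk$, which, decomposed through the conditionings on $\A(0)$ and~$T^*$ and combined with Lemma~\ref{lemma:T^*}, should yield a self-bounding bound $\mathrm{Var}_\infty(T^*)\le C\,\E_\infty(T^*)/(1-\rho)+(\text{lower order})$; together with the already-proved $\E_\infty(T^*)=O((1-\rho)^{-2})$ this gives $\mathrm{Var}_\infty(T^*)=o((\E_\infty T^*)^2)$ as soon as one also has a crude a priori lower bound $\E_\infty(T^*)\to+\infty$ as $\rho\uparrow1$ (not furnished by the mean-reverting bound of~\cite{Bouman11:1} when $\beta=+\infty$, and to be obtained separately). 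A further technicality is that turning $\mathrm{Var}_\infty(\tau_1-\tau_2)=\Theta(m)$ into $\E_\infty|\tau_1-\tau_2|=\Theta(\sqrt m)$ requires control of a fourth moment, hence slightly more than a finite variance of~$\xi$, supplied by a mild extra moment assumption or by truncation. Everything preceding the concentration step is routine, relying only on the stationarity identity~\eqref{eq:stationarity-RC}, the maximal inequalities of Lemmas~\ref{lemma:bound-max} and~\ref{lemma:bound-max-2}, Lemma~\ref{lemma:T^*}, and Jensen's inequality in both directions.
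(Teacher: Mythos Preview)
Your upper bound is essentially the paper's: both start from $m=\E\xi\,M$, bound $\E_\infty(\tau_{\max})$ via Lemma~\ref{lemma:bound-max}, feed in $\E_\infty(\Linfty{\A(0)})\le \E\xi\,M+Rv^{1/2}M^{1/2}$ from $\A(0)\stackrel{d}{=}\S(T^*)$, and close the loop with Lemma~\ref{lemma:T^*} to obtain a self-bounding inequality $(1-\rho)M\lesssim M^{1/2}+O(1)$.

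Your lower bound takes a genuinely different route, and the gap you identify is real. You try to show $\E_\infty|\tau_1-\tau_2|\asymp\sqrt m$, which indeed requires the anti-Jensen estimate $\E_\infty\big[(A_1(0)\wedge A_2(0))^{1/2}\big]\gtrsim m^{1/2}$; this is a nontrivial concentration statement for the stationary law, and your Efron--Stein plan needs both an a~priori $M\to\infty$ and extra moment control that the paper does not assume. The paper sidesteps this entirely by running the self-bounding argument on the \emph{square root} rather than on the mean: it proves directly that $\liminf_{\rho\uparrow 1}(1-\rho)\,\E_\infty\big[A_1(0)^{1/2}\big]>0$, which by Jensen (in the easy direction) gives the desired lower bound on $m$. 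The device is Lemma~\ref{lemma:bound-max-2}, a \emph{lower} bound of the form $\E\big[(\max_j W_j(x_j))^{1/2}\big]\ge (m\Linfty{\x})^{1/2}-O(\Linfty{\x}^{-1/2})$. Applying it twice (once to $S_1(\tau_{\max})$ conditionally on $\tau_{\max}$, once to $\tau_{\max}$ conditionally on $\A(0)$) yields
\[
h_\rho(1-\rho)\,\E_\infty\big[(Q_1^a(0))^{1/2}\big]\ \ge\ (\mu\E\xi)^{1/2}\,\E_\infty\big[\Linfty{\Q^a(0)}^{1/2}-(Q_1^a(0))^{1/2}\big]\ -\ (\text{vanishing terms}),
\]
and the remaining task is to show the right-hand expectation has positive $\liminf$. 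The paper does this by a weak-convergence argument: $(Q_1^a(0)+1)^{-1/2}\big(Q_r^a(0)-Q_1^a(0)\big)$ converges in law as $\rho\uparrow 1$ (via~\eqref{eq:stationarity-RC} and the CLT for $\S$), hence so does $\Linfty{\Q^a(0)}^{1/2}-(Q_1^a(0))^{1/2}$, and Fatou finishes. This replaces your concentration obstacle by a distributional-limit step that requires only the standing second-moment assumption on~$\xi$, no fourth moments, and no separate proof that $M\to\infty$.
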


\begin{proof}
	Since $\beta = +\infty$, we have
	\[ \E_\infty(\Lone{\Q(0)}) = \E_\infty(\Lone{\A(0)}) = R \E_\infty(A_1(0)). \]
	Thus we only need to prove
	\begin{equation} \label{eq:upper-bound}
		\limsup_{\rho \uparrow 1} \left[ (1-\rho)^2 \E_\infty\left(A_1(0)\right) \right] < +\infty,
	\end{equation}
	which implies the lower bound in~\eqref{eq:alpha(infty)}, and
	\begin{equation} \label{eq:lower-bound}
		\liminf_{\rho \uparrow 1} \left\{ (1-\rho) \E_\infty\big[A_1(0)^{1/2}\big] \right\} > 0
	\end{equation}
	which by Jensen's inequality implies the upper bound in~\eqref{eq:alpha(infty)}.
	\\
	
	\noindent \textit{Proof of~\eqref{eq:upper-bound}.} Starting from~\eqref{eq:A_1}, using that $\E_a(\tau_1) = a/(1-\E\xi)$, subtracting on both sides $\E(\xi) \E_\infty(\tau_1)$ (for this precise operation we need the finiteness of the stationary first moment, to avoid doing $\infty-\infty$) and dividing by $\E\xi$, we end up with
	\[ \frac{1}{\E\xi} \left( 1 - \frac{\E\xi}{1-\E\xi} \right) \E_\infty\left(A_1(0)\right) = \E_\infty\left( T^* - \tau_1 \right). \]
	Then, adding and subtracting $\tau_{\max}$ in the right hand side, and using that $\E\xi = \rho/2$, we obtain
	\[ g_\rho (1-\rho) \E_\infty\left(A_1(0)\right) = \E_\infty\left( \tau_{\max} - \tau_1 \right) + \E_\infty\left( T^* - \tau_{\max} \right), \]
	with $g_\rho = 4/(\rho(2-\rho))$. Thus in view of Lemma~\ref{lemma:T^*}, to prove~\eqref{eq:upper-bound} we only have to show that
	\begin{equation} \label{eq:upper-bound-tau}
	\limsup_{\rho \uparrow 1} \left( \frac{\E_\infty\left(\tau_{\max} - \tau_1 \right)}{\left[ \E_\infty(A_1(0)) \right]^{1/2}} \right) < +\infty.
\end{equation}
	Applying Lemma~\ref{lemma:bound-max} to $\tau_{\max}$ under
$\P_\a$ (equal in distribution to $\max_r V_r(a_r)$), we obtain,
denoting temporarily $\mu = 1/(1-\E\xi)$, $\E_\a\left(\tau_{\max} -
\tau_1 \right) \leq \mu (\Linfty{\a} - a_1) + R \nu \Linfty{\a}^{1/2}$.
Integrating over the stationary distribution of $\Q$ and using Jensen's
inequality, we obtain
	\begin{multline*}
		\E_\infty\left(\tau_{\max} - \tau_1 \right) \leq \mu \E_\infty\left(\Linfty{\A(0)} - A_1(0) \right)\\
		+ R \nu \left[ \E_\infty\left(A_1(0) \right) \right]^{1/2}.
	\end{multline*}
	In particular, to prove~\eqref{eq:upper-bound-tau} it is enough to show that
	\begin{equation} \label{eq:upper-bound-diff}
		\limsup_{\rho \uparrow 1} \left( \frac{\E_\infty\left( \Linfty{\A(0)} - A_1(0) \right)}{\left[ \E_\infty(A_1(0)) \right]^{1/2}} \right) < +\infty.
	\end{equation}
	We have already shown in the proof of Proposition~\ref{prop:stab} that
	\[ \E_\infty \left(\Linfty{\A(0)}\right) = \E_\infty \left(\Linfty{\S(T^*)}\right) \leq \E(\xi) \E_\infty(T^*) + R v \left[\E_\infty(T^*)\right]^{1/2}, \]
	and since $\E_\infty(A_1(0)) = \E(\xi) \E_\infty(T^*)$ this gives
\[ \E_\infty \left( \Linfty{\A(0)} - A_1(0) \right) \leq Rv \left[ \E_\infty(A_1(0)) / \E\xi \right]^{1/2}. \]
	This proves~\eqref{eq:upper-bound-diff} which completes the proof of~\eqref{eq:upper-bound}. \\
	
	\noindent\textit{Proof of~\eqref{eq:lower-bound}.} We have
	\[ \E_\infty\big[(Q_1^a(0))^{1/2}\big] = \E_\infty\big[(S_1(T^*))^{1/2}\big] \geq \E_\infty\big[(S_1(\tau_{\max}))^{1/2}\big]. \]
	Applying Lemma~\ref{lemma:bound-max-2} to $S_1(\tau_{\max})$ by using the independence between $S_1$ and $\tau_{\max}$, we obtain
	\begin{multline*}
		\E_\infty\big[(Q_1^a(0))^{1/2}\big] \geq (\E\xi)^{1/2} \E_\infty\big[(\tau_{\max})^{1/2}\big]\\
		- c \E_\infty\left[(\tau_{\max})^{-1/2}\right]
	\end{multline*}
	for some finite constant~$c$ independent of~$\rho$. Applying again Lemma~\ref{lemma:bound-max-2} to $\tau_{\max}$ we obtain
	\begin{multline*}
		\E_\infty\big[(Q_1^a(0))^{1/2}\big] \geq (\mu\E\xi)^{1/2} \E_\infty (\Linfty{\Q^a(0)}^{1/2})\\
		- c \E_\infty (\Linfty{\Q^a(0)}^{-1/2}) - c \E_\infty\left[(\tau_{\max})^{-1/2}\right].
	\end{multline*}
	Subtracting $(\mu\E\xi)^{1/2} \E_\infty (Q_1^a(0)^{1/2})$ on both sides we finally end up with
	\begin{multline*}
		h_\rho (1 - \rho) \E_\infty\big[(Q_1^a(0))^{1/2}\big]\\
		\geq (\mu\E\xi)^{1/2} \E_\infty \left(\Linfty{\Q^a(0)}^{1/2} - (Q_1^a(0))^{1/2}\right)\\
		- c \E_\infty \left(\Linfty{\Q^a(0)}^{-1/2}\right) - c \E_\infty\left[(\tau_{\max})^{-1/2}\right],
	\end{multline*}
	with $h_\rho = (1-(\rho/(2-\rho))^{1/2})/(1-\rho) \to 1$ as $\rho \to 1$. The two last terms of the previous lower bound vanish as $\rho \uparrow 1$. As for the first term, it is not hard based on~\eqref{eq:Q} to show that the $R$-dimensional vector $((Q_1^a(0)+1)^{-1/2} (Q_r^a(0) - Q_1^a(0)), r \in \Rcal)$ under $\P_\infty$ converges weakly as $\rho \uparrow 1$, from which one readily deduces thanks to the continuous mapping theorem that $\Linfty{\Q^a(0)}^{1/2} - (Q_1^a(0))^{1/2}$ also converges weakly as $\rho \uparrow 1$ to a random variable which is not identically zero. Using Fatou's lemma, this implies that
	\[ \liminf_{\rho \uparrow 1} \E_\infty \left(\Linfty{\Q^a(0)}^{1/2} - (Q_1^a(0))^{1/2}\right) > 0, \]
	which concludes the proof.
\end{proof}

\subsection{The case $\beta > 1$} \label{sub:perturbation}

\subsubsection{Main result and a perturbation argument} \label{subsub:perturbation}

In the previous subsection, we analyzed in detail the case
$\beta = \infty$ and proved $\alpha(\infty) = 2$.
On the other hand, the simulation results,
see Figure~\ref{fig:betatoalpha}, strongly suggest that
$\alpha(\beta) = 2$ for any $\beta > 1$,
and the goal of this subsection is to explain this result.
The key to understand the behavior of $\Q$ for $\beta > 1$ is
Proposition~\ref{prop:T_1}.

This proposition~\ref{prop:T_1} is only concerned with the behavior of an active queue, i.e., a queue subject to the dynamic~\eqref{eq:A}. Thus in order to prove Proposition~\ref{prop:T_1}, we only need to assume $\rho < 2$, which is the stability condition for an active queue. It is easy to see that these results actually hold uniformly in $\rho \leq 1$, which is what would be needed in order to go in heavy traffic for the full model.

In the sequel, we let $T_1$ be the time at which $A_1$ advertizes a release for the first time.
\begin{prop} \label{prop:T_1}
	If $\beta > 1$, then $(A_1(T_1), T_1 - \tau_1)$ under $\P_{a_1}$ converges weakly as $a_1 \to +\infty$ to a non-degenerate random variable.
\end{prop}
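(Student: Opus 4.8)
The plan is to show that, as $a_1\to\infty$, the first release of $A_1$ occurs with probability tending to~$1$ only \emph{after} the queue has drained down to a bounded neighbourhood of the origin, so that the limit law of $(A_1(T_1),T_1-\tau_1)$ is inherited from the behaviour of the queue near~$0$, which does not depend on~$a_1$. Before its first release, $A_1$ coincides with the release‑free queue $\tilde A_1$ defined by $\tilde A_1(k)=Y(\tilde A_1(k-1))$, and, given the trajectory of $\tilde A_1$, the release time is $T_1=\inf\{k\ge1:U_k<\psi(\tilde A_1(k))\}$. The structural fact that drives the argument is that, since $\xi\in\N$, the map $Y(a)=a+\xi-\indicator{a>0}$ decreases $a$ by at most~$1$; hence, started from $a_1>M$, the walk $\tilde A_1$ has no downward overshoot below level~$M$, i.e.\ $\tilde A_1(\sigma_M)=M$ exactly, where $\sigma_M=\inf\{k\ge1:\tilde A_1(k)\le M\}$. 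Moreover $\tilde A_1$ is positive on $[0,\sigma_M]$, so there it is an ordinary random walk with increments $\xi-1$ of mean $\rho/2-1<0$ and finite variance.

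Next I would bound the probability that a release occurs before $\sigma_M$. On the complement of $B:=\{T_1>\sigma_M\}$ a release is advertized at some step $\le\sigma_M$, so, taking conditional expectation over the coins,
\[ \P_{a_1}(B^c)\ \le\ \E_{a_1}\!\left[\sum_{k=1}^{\sigma_M}\psi\big(\tilde A_1(k)\big)\right]\ =\ \sum_{y\ge M}\psi(y)\,\E_{a_1}\!\big[\#\{k\le\sigma_M:\tilde A_1(k)=y\}\big]. \]
Because the increment $\xi-1$ has nonzero drift, the Green's function of the associated (transient) free walk is bounded by a constant $c$ independent of the starting point and the target level, which gives $\P_{a_1}(B^c)\le c\sum_{y\ge M}\psi(y)=:\varepsilon_M$, uniformly in $a_1>M$. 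This is the only place where $\beta>1$ is used: it makes $\psi(y)=(1+y)^{-\beta}$ summable, so $\varepsilon_M\to0$ as $M\to\infty$.

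On the event $B$ we have $A_1(\sigma_M)=\tilde A_1(\sigma_M)=M$, a value that does not depend on $a_1$, and $B$ is measurable with respect to the information up to the stopping time $\sigma_M$; by the strong Markov property the evolution after $\sigma_M$ is that of the queue started from~$M$ with fresh driving randomness, and for $M\ge1$ both $T_1$ and $\tau_1$ exceed $\sigma_M$. Consequently the conditional law of $(A_1(T_1),T_1-\tau_1)$ given $B$ equals the law $\mathcal L_M$ of the same functional started from~$M$. Writing $\mathcal L_{a_1}$ for the law under $\P_{a_1}$, the decomposition $\mathcal L_{a_1}=\P_{a_1}(B)\,\mathcal L_M+\P_{a_1}(B^c)\,\rho_{a_1}$ for a probability measure $\rho_{a_1}$ gives $\lVert\mathcal L_{a_1}-\mathcal L_M\rVert_{\mathrm{TV}}=\P_{a_1}(B^c)\,\lVert\rho_{a_1}-\mathcal L_M\rVert_{\mathrm{TV}}\le2\varepsilon_M$ for every $a_1>M$, hence $\lVert\mathcal L_{a_1}-\mathcal L_{a_1'}\rVert_{\mathrm{TV}}\le4\varepsilon_M$ for all $a_1,a_1'>M$. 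Since $\varepsilon_M\to0$, the family $(\mathcal L_{a_1})_{a_1}$ is Cauchy in total variation and therefore converges, in particular weakly, to a probability law $\mu$, and $\mathcal L_M\to\mu$ as well.

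It remains to check that $\mu$ is non‑degenerate. Fixing a sufficiently large level $L$ (so that $\varepsilon_L<1$), it is enough to exhibit two $\mathcal L_L$‑events of positive probability on which $A_1(T_1)$ takes distinct values, since the decomposition above then transfers the corresponding positive lower bounds to $\mathcal L_{a_1}$ and, letting $a_1\to\infty$, to $\mu$. One such event drives the queue, starting from~$L$, monotonically down to~$0$ without any release — feasible because $\P(\xi=0)\ge1-\rho/2>0$ — and then uses the forced release at~$0$ (where $\psi(0)=1$), producing $A_1(T_1)=0$; the other advertizes a release already at the first step, while the queue is still at least~$1$, producing $A_1(T_1)\ge1$. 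I expect the main obstacle to be the estimate in the second paragraph: controlling, uniformly in the large starting point $a_1$, how many times the walk revisits each high level before first dropping below~$M$ — the uniform Green's‑function bound — and marrying it to the summability $\sum_y(1+y)^{-\beta}<\infty$ that forces $\beta>1$; once the absence of overshoot coming from $\N$‑valued increments is noticed, the rest is bookkeeping with the strong Markov property.
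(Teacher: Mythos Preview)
Your argument is correct and takes a genuinely different route from the paper. The paper builds an explicit two--sided process $Z$ by concatenating the post--$\tau_1$ evolution $V$ with a time--reversed, conditioned walk $W^\uparrow$ going backward from $\tau_1$; this furnishes a coupling of $(A_1(T_1),T_1-\tau_1)$ for all starting points simultaneously and identifies the limit as $(Z(T_\infty^Z),T_\infty^Z)$. Finiteness of $T_\infty^Z$ is then reduced to showing that $W^\uparrow$ advertizes a release only finitely often, which the paper proves via a local--time estimate $\sup_a\E[L^\uparrow(a)]<\infty$ together with $\sum_a\psi(a)<\infty$. Your proof replaces the duality construction by the skip--free observation $\tilde A_1(\sigma_M)=M$ and a Green's--function bound for the drifting free walk, which plays exactly the same role as the paper's local--time bound, and then shows the family $(\mathcal L_{a_1})$ is Cauchy in total variation. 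The trade--offs: the paper's route yields an explicit description of the limiting law (Lemma~\ref{lemma:limiting-picture}), which it exploits later in Section~\ref{subsub:intermediate-beta} to discuss tail behaviour; your route is more elementary, avoids the conditioned walk entirely, delivers the stronger total--variation convergence, and works directly for general $\zeta$ without the preliminary reduction the paper invokes. One small point: the proposition asks that \emph{both} coordinates be non--deterministic, while your last paragraph only argues this for $A_1(T_1)$; but your two events also separate $T_1-\tau_1$ (the monotone descent gives $T_1-\tau_1=0$, the immediate release gives $T_1-\tau_1\le -1$), so the same reasoning covers the second coordinate.
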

By non-degenerate, we mean a random variable $(X, Y) \in \N \times \Z$ such that both $X$ and $Y$ are non-deterministic and almost surely finite (an explicit expression for the weak limit of $(A_1(T_1), T_1 - \tau_1)$ is given in Lemma~\ref{lemma:limiting-picture}). Note that from Proposition~\ref{prop:T_1}, which is concerned with the behavior of one active queue, one can easily deduce the behavior of $\A(T^*)$ and~$T^*$ (the proof of the following result is only sketched in order to comply with page limitations).
\begin{corollary} \label{cor:T^*}
	Assume that $\beta > 1$ and consider any sequence of initial states $\a_n = (a_{n,r})$ such that $\min_r a_{n,r} \to +\infty$: then $(\A(T^*), T^* - \tau_{\max})$ under $\P_{\a_n}$ converges weakly as $n \to +\infty$ to a non-degenerate random variable.
\end{corollary}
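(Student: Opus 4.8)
The plan is to bootstrap from Proposition~\ref{prop:T_1}, which describes a single active queue around its first release, to the $R$-dimensional statement. The reduction rests on two facts about one active queue started from $a\to\infty$. First, because $\beta>1$ makes $(\psi(a))_a$ summable, the expected number of releases advertised while queue~$r$ is still at a level $\ge c$ is $O(c^{1-\beta})\to 0$; hence, with high probability, queue~$r$ advertises its first release within $O_\P(1)$ slots of $\tau_r$ and, by Proposition~\ref{prop:T_1}, in a state of size $O_\P(1)$, and more precisely the trajectory of queue~$r$ in a fixed-width window around $\tau_r$ converges weakly, as $a\to\infty$, to a limiting trajectory of a negative-drift random walk around its first passage to~$0$. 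Second, after $\tau_r$ (where $A_r(\tau_r)=0$) the queue evolves as the \emph{fixed}, initial-condition-free Markov chain~\eqref{eq:A} started from~$0$; since $\rho<2$ this chain is positive recurrent, with a unique stationary law $\pi$ (not a point mass, since from~$0$ positive states are reached) and geometric mixing. Thus the sole effect of the growing data $\a_n$ is to send $\tau_{\max}$ to infinity, and everything relevant to $(\A(T^*),T^*-\tau_{\max})$ occurs in an $O_\P(1)$-neighbourhood of $\tau_{\max}$, where the queues are small and obey an $n$-free law.

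The first technical step is the order-statistics estimate $\tau_{(R)}-\tau_{(R-1)}\to+\infty$ in probability under $\P_{\a_n}$: since $\tau_r$ is distributed as $V_r(a_{n,r})$ with the $V_r$ i.i.d.\ random walks, anticoncentration gives that the maximal atom of $V_r(a_{n,r})$ is $O(a_{n,r}^{-1/2})$, whence $\P(\tau_{(R)}-\tau_{(R-1)}<K)\le\binom{R}{2}\max_{r\ne r'}\P(|\tau_r-\tau_{r'}|<K)=O\big(K\,(\min_r a_{n,r})^{-1/2}\big)\to 0$. Consequently $\tau_{\max}-\tau_{(r)}\to+\infty$ for every $r\le R-1$, so by the geometric mixing above, at time $\tau_{\max}$ every queue other than the (for large $n$, a.s.\ unique) index $r_0$ realising $\tau_{\max}$ has been running its near-$0$ chain for a time tending to infinity and is therefore $\epsilon$-close to $\pi$, while queue $r_0$ follows its first-passage trajectory. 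Fixing a window half-width $L$ and conditioning on $r_0$ and on the vector of emptying times (under which conditioning the post-$\tau_r$ chains remain independent with their fixed laws), the joint law of the $R$ queues on $[\tau_{\max}-L,\tau_{\max}+L]$ then converges weakly, along any subsequence on which $\P_{\a_n}(r_0=r)$ converges (for a symmetric sequence $a_{n,r}\equiv a_n$ this probability is $1/R$ by symmetry, so no subsequence is needed), to $R$ independent ingredients: the limiting first-passage trajectory for queue $r_0$, and stationary runs of the near-$0$ chain of length $2L$ for the others. Since the $O(c^{1-\beta})$ bound of the first paragraph also yields $|T^*-\tau_{\max}|=O_\P(1)$, for $L$ large the pair $(\A(T^*),T^*-\tau_{\max})$ coincides, outside an event of small probability, with a fixed measurable functional of that window; letting $L\to\infty$ and invoking consistency then delivers the asserted weak limit.

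The limit is automatically concentrated on $\N^R\times\Z$ with both marginals a.s.\ finite, and it is non-degenerate: started from a random configuration with one coordinate equal to~$0$ and the others distributed as $\pi$, neither the state of the $R$ queues at their first simultaneous release nor the number of slots until that release is deterministic. The step I expect to be the real work is the window-convergence of the previous paragraph — one must marry the order-statistics gap, the geometric-mixing bound run \emph{through} the conditioning on $r_0$ and on the emptying times, and the classical weak convergence of a random walk's path around a first passage to~$0$ — and, for finite $\beta$, one must also track that a queue can advertise a release (with a $\zeta$-jump) before ever hitting~$0$, which is exactly why $\tau_{\max}$ is only an $O_\P(1)$-accurate anchor and why the argument is carried through a finite window rather than through a single epoch. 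Everything else follows routinely from Proposition~\ref{prop:T_1} and bookkeeping, which is why the proof of the corollary can be abridged.
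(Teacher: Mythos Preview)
Your proposal is substantially more detailed than the paper's own argument, which is literally a two-sentence sketch: ``At time $T^*$, each active queue needs to have advertized a release at least once. At this time, the active queue was of order one by Proposition~\ref{prop:T_1} and since it is stable, it remains of order one at time~$T^*$.'' That sketch really only delivers \emph{tightness} of $(\A(T^*), T^* - \tau_{\max})$, not the weak convergence the corollary actually claims; your window argument, by contrast, attempts to construct the limit.

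Your route differs from the paper's in that you make explicit the order-statistics gap $\tau_{(R)} - \tau_{(R-1)} \to \infty$ and the geometric mixing of the post-$\tau_r$ chain, using them to decouple the queues in an $O(1)$ window around $\tau_{\max}$; the paper's sketch leans implicitly on the same two facts but does not isolate them. You also correctly flag a subtlety the paper glosses over: the limiting law of the \emph{ordered} vector $\A(T^*)$ depends, through a coordinate permutation, on which index $r_0$ realises $\tau_{\max}$, so for a general asymmetric sequence $\a_n$ one may need subsequences along which $\P_{\a_n}(r_0 = r)$ converges. This is a genuine point --- strictly speaking, the corollary as stated is unambiguous only for sequences where these probabilities converge (e.g., symmetric $\a_n$), or if one reads $\A(T^*)$ modulo permutation --- but it does not affect the paper's downstream use of the corollary, which only needs that $\A(T^*)$ and $T^* - \tau_{\max}$ are $O_\P(1)$.

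One small quibble: your anticoncentration step writes $\tau_r \stackrel{d}{=} V_r(a_{n,r})$, which is only exact for $\beta = +\infty$ (no $\zeta$-jumps). For finite $\beta$ you need the $O(c^{1-\beta})$ bound from your first paragraph to couple $\tau_r$ with the random-walk hitting time up to an $O_\P(1)$ error, after which anticoncentration goes through. You acknowledge this wrinkle at the end, so this is bookkeeping rather than a gap.
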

\begin{proof}[(sketch)]
	At time~$T^*$, each active queue needs to have advertized a release at least once. At this time, the active queue was of order one by Proposition~\ref{prop:T_1} and since it is stable, it remains of order one at time~$T^*$.
\end{proof}
In the extreme case $\beta = +\infty$ we have by definition $\A(T^*) = {\bf{0}}$ and (by Lemma~\ref{lemma:T^*}) $T^* \approx \tau_{\max}$. By showing that both $\A(T^*)$ and $T^* - \tau_{\max}$ are of order one when $\beta > 1$, Corollary~\ref{cor:T^*} therefore justifies seeing the case $\beta > 1$ as a perturbation of the case $\beta = +\infty$. In particular, treating $\E_\infty(\A(T^*))$ and $\E_\infty(T^* - \tau_{\max})$ as constants, the heuristic reasoning outlined in Section~\ref{sub:lingering-effect} goes through and again predicts a scaling exponent $\alpha(\beta) = 2$ for any $\beta > 1$. This is indeed in very good agreement with the simulation results discussed in Section~\ref{sub:simu-beta} and was stated as our main result~\ref{mr:large-beta}.

The remainder of this subsection is devoted to the proof of Proposition~\ref{prop:T_1}. The proof proceeds in two steps: in the first step we show that the proof of Proposition~\ref{prop:T_1} reduces to proving a simpler property of some particular random walk (see~\eqref{eq:goal-N}). In the second step we prove that this property holds when $\beta > 1$.

\subsubsection{First step}

Since before time $T_1$, the active queue $A_1$ does by definition not
advertize any release, it is enough to prove Proposition~\ref{prop:T_1} in
the case $\zeta = 0$, which we assume in the remainder of this subsection.
In particular, $A_1$ is a random walk with step distribution $\xi-1$ reflected at~0. The reduction of the proof of Proposition~\ref{prop:T_1} to proving~\eqref{eq:goal-N} relies on the following coupling of the processes $A_1$ for all possible initial states $a \geq 0$.

Let $V$ and $W^\uparrow$ be two independent processes with the following distribution. Let $V$ be a version of $A_1$ under $\P_0$, i.e., $(V(k), k \geq 0)$ is a random walk started at~0, with step distribution $\xi-1$ and reflected at~0.

Let $W^\uparrow$ be a random walk started at~0, with step distribution $1 - \xi$ and conditioned on never visiting~0 after time~0: since $\E(1-\xi) > 0$ this conditioning is well-defined. Let moreover $\kappa(a) = \max\{ k \geq 0: W^\uparrow(k) = a \}$ be the time of the last visit to $a \in \{0,1,\ldots,\infty\}$ (to be understood as $\kappa(a) = +\infty$ for $a = +\infty$), so that $\kappa(a)$ is almost surely finite if $a$ is finite. Let finally $W^\uparrow_a$ be the process $W^\uparrow$ stopped at $\kappa(a)$, i.e., $W^\uparrow_a(k) = W^\uparrow(k)$ if $k \leq \kappa(a)$ and $W^\uparrow_a(k) = W^\uparrow(\kappa(a)) = a$ if $k \geq \kappa(a)$.

\begin{lemma}
	Extend $A_1$ on $\Z$ by setting $A_1(k) = A_1(0)$ for $k \leq 0$, and let $A^+ = (A_1(\tau_1+k), k \geq 0)$ and $A^- = (A_1(\tau_1-k), k \geq 0)$. Then for any finite $a \geq 0$, $(A^+, A^-)$ under $\P_a$ is equal in distribution to $(V, W^\uparrow_a)$.
	
	In particular, as $a \to +\infty$, $(A^+, A^-)$ under $\P_a$ converges weakly to $(V, W^\uparrow)$.
\end{lemma}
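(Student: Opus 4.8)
The plan is to split the path of $A_1$ at the first hitting time $\tau_1$ of~$0$, dispatch the part after $\tau_1$ by the strong Markov property, and identify the part before $\tau_1$ by a time-reversal argument. Since $\E(\xi-1)<0$ the reflected $(\xi-1)$-walk $A_1$ hits~$0$ almost surely, so $\tau_1<\infty$ a.s., and $A_1(\tau_1)=0$. The strong Markov property at $\tau_1$ then shows that $A^+=(A_1(\tau_1+k),k\ge0)$ is, conditionally on $\mathcal F_{\tau_1}$, a copy of $A_1$ started from~$0$; hence $A^+$ has the law of $V$ and is independent of $\mathcal F_{\tau_1}$, and in particular of the $\mathcal F_{\tau_1}$-measurable process $A^-$. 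As $V$ and $W^\uparrow$, and therefore $V$ and $W^\uparrow_a$, are independent by construction, it suffices to prove that $A^-$ under $\P_a$ is equal in distribution to $W^\uparrow_a$: the joint statement then follows because both laws factor as the product of their marginals, and the weak convergence as $a\to+\infty$ follows from the weak convergence of the $A^-$ marginal together with $A^+\stackrel{d}{=}V$ for every~$a$.

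The identification $A^-\stackrel{d}{=}W^\uparrow_a$ is a time-reversal identity for the reflected $(\xi-1)$-walk, and is the heart of the argument. The excursion $(A_1(0)=a,A_1(1),\dots,A_1(\tau_1)=0)$ is a path of the $\mu$-walk (with $\mu$ the law of $\xi-1$) from~$a$ staying $\ge 1$ until it first reaches~$0$; reversing it in time reverses and negates its increments, turning a product of $\mu$-weights into a product of $\check\mu$-weights, where $\check\mu$ is the law of $1-\xi$. Two structural features make the bookkeeping close up. First, since $\xi\ge0$ the walk $A_1$ is skip-free to the left, so $\tau_1$ is reached exactly and $A_1(\tau_1-1)=1$; dually the $\check\mu$-walk is skip-free to the right, and a walk with increments $\le1$ that drifts to $+\infty$ and is conditioned never to revisit~$0$ must leave~$0$ to~$1$ and stay $\ge1$, so $W^\uparrow(1)=1$ — exactly the constraint seen at the start of the reversed excursion. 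Second, $\E(1-\xi)>0$, so the $\check\mu$-walk is transient, $\kappa(a)$ is a.s. finite, and the last-exit (renewal) decomposition of $W^\uparrow$ at level~$a$ converges; matching cylinder probabilities on the two sides using the increment reversal and this last-exit decomposition yields the claimed equality in law. Equivalently, one may invoke the classical fact that reversing a random walk conditioned to stay positive, run up to its last visit to a level~$a$, produces that walk killed at~$0$ and started from~$a$, applied here in the convenient skip-free, negative-drift setting.

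Finally, for the "in particular" part, one lets $a\to+\infty$. Being transient to $+\infty$ and skip-free to the right, the $\check\mu$-walk visits every positive level, and its last-visit time $\kappa(a)$ tends to $+\infty$ almost surely; hence for any fixed horizon~$N$ one has $W^\uparrow_a(k)=W^\uparrow(k)$ for $0\le k\le N$ with probability tending to~$1$, so $W^\uparrow_a\to W^\uparrow$ almost surely in the product topology on path space, and therefore $A^-$ under $\P_a$ converges weakly to $W^\uparrow$. Combined with $A^+\stackrel{d}{=}V$ and the product structure of both joint laws, this gives the weak convergence of $(A^+,A^-)$ under $\P_a$ to $(V,W^\uparrow)$, from which Proposition~\ref{prop:T_1} will be read off by continuous mapping. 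The main obstacle is the middle step: making the intuitively clear reversal picture precise, i.e.\ carrying out the last-exit/renewal bookkeeping so that the normalizing constants (the probability of the conditioning event defining $W^\uparrow$, the escape probabilities from level~$a$, the Green's function of the killed $\mu$-walk) cancel exactly — or, alternatively, pinning down the right classical reversal identity and verifying its hypotheses in the present setting.
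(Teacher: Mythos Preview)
Your proposal is correct and follows essentially the same three-step approach as the paper: strong Markov property at $\tau_1$ for $A^+$, a time-reversal (duality) identification for $A^-$, and $\kappa(a)\to+\infty$ for the weak limit. The paper's own proof is a three-line sketch that simply invokes ``duality'' for the middle step; you have unpacked this word into the actual mechanism (negating and reversing increments turns the $(\xi-1)$-walk killed at $0$ into the $(1-\xi)$-walk conditioned to avoid $0$, with first-hitting of $0$ from $a$ corresponding to last-visit of $a$ from $0$), correctly flagged the skip-free features that make the correspondence clean, and honestly noted that the normalization bookkeeping is the one place requiring care --- which is precisely the content the paper leaves implicit.
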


\begin{proof}
	That $A^+$ is equal in distribution to $V$ and is independent from $A^-$ follows from the strong Markov property at time~$\tau_1$. That $A^-$ is equal in distribution to $W^\uparrow_a$ comes from duality. The weak convergence result then follows from the fact that $\kappa(a) \to +\infty$ almost surely as $a \to +\infty$, so that $(V, W^\uparrow_a) \to (V, W^\uparrow)$ almost surely as $a \to +\infty$.
\end{proof}

Essentially, this representation of $A_1$ shifts the origin of time at $\tau_1$: $A^+$ looks at $A_1$ from time $\tau_1$ forward in time, while $A^-$ looks at $A_1$ from $\tau_1$ backward in time. Moreover, this representation couples all the processes $A_1$ with different initial states on the same probability space, which yields a simple representation for the law of $(A_1(T_1), T_1 - \tau_1)$. Let in the sequel $Z = (Z(k), k \in \Z)$ be the following process (indexed by $\Z$): $Z(k) = V(k)$ if $k \geq 0$ and $Z(k) = W^\uparrow(-k)$ if $k \leq 0$. The previous coupling immediately implies the following result.

\begin{lemma} \label{lemma:limiting-picture}
	Let $(U_k, k \in \Z)$ be i.i.d., uniformly distributed in $[0,1]$, independent from $Z$, and for $0 \leq a \leq +\infty$ and $k \in \Z$ let
	\[ D_{a,k} = \begin{cases}
		0 & \text{ if } k < -\kappa(a),\\
		\indicator{U_k < \psi(Z(k))} & \ \text{ else,}
	\end{cases} \]
	and $T_a^Z = \inf\left\{ k \in \Z: D_{a,k} = 1 \right\}$. Then for any finite $a \geq 0$, $(A_1(T_1), T_1 - \tau_1)$ under $\P_a$ is equal in distribution to $(Z(T_a^Z), T_a^Z)$ and in particular, it converges weakly as $a \to +\infty$ to $(Z(T_\infty^Z), T_\infty^Z)$ (with $Z(T_\infty^Z) = +\infty$ if $T_\infty^Z = -\infty$).
\end{lemma}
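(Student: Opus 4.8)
The plan is to obtain Lemma~\ref{lemma:limiting-picture} directly from the coupling lemma above, by attaching the coin-flip randomness to the coupled bi-infinite process and then reading off the first release time. First I would rephrase the coupling in bi-infinite form: since $A^+(k)=A_1(\tau_1+k)$ and $A^-(k)=A_1(\tau_1-k)$, gluing the two shows that the process $(A_1(\tau_1+k),\,k\in\Z)$ under $\P_a$ (with the extension $A_1\equiv a$ on $\Z_{\le 0}$) is equal in distribution to the process equal to $V$ on $\N$ and to $k\mapsto W^\uparrow_a(-k)$ on $-\N$; by the coupling lemma this process coincides with $Z$ on $\{k\ge -\kappa(a)\}$ and is frozen at the value $a$ on $\{k<-\kappa(a)\}$.

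Next I would bring in the coin flips. Because $\psi(a)=(1+a)^{-\beta}$ satisfies $\psi(0)=1$, an active queue certainly advertizes a release when empty, so the first release occurs no later than $\tau_1$; hence $(A_1(T_1),T_1-\tau_1)$ is a function of the walk up to time $\tau_1$ together with the i.i.d.\ uniforms $(U_k)$ governing the slots up to $\tau_1$, via $T_1=\inf\{k:U_k<\psi(A_1(k))\}$. Since $(U_k)$ is independent of the whole walk and $\tau_1$ is a stopping time, re-indexing the uniforms relative to $\tau_1$ keeps them i.i.d.\ and independent of $A^-$; reading the release events backward from $\tau_1$ (so the \emph{first} release forward in time is the \emph{last} one met when scanning backward), and substituting $A^-\stackrel{\mathrm{d}}{=}W^\uparrow_a$ and $\tau_1\stackrel{\mathrm{d}}{=}\kappa(a)$, turns $(A_1(T_1),T_1-\tau_1)$ into exactly $(Z(T_a^Z),T_a^Z)$ with $D_{a,k}$ as defined --- the clause $D_{a,k}=0$ for $k<-\kappa(a)$ encoding that below $\tau_1$ in reversed time there is no further walk (equivalently, no further slot) to inspect. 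This yields the claimed equality in distribution for each finite $a$.

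For the weak convergence I would use the natural coupling in which $W^\uparrow_a$ is $W^\uparrow$ stopped at $\kappa(a)$, so that $\kappa(a)\uparrow+\infty$ almost surely as $a\to+\infty$. Then for every fixed $k$ one has $D_{a,k}=D_{\infty,k}$ once $a$ is large, whence $T_a^Z=T_\infty^Z$ on an event of probability tending to $1$ whenever $T_\infty^Z$ is finite, while $T_a^Z\to-\infty$ in probability on $\{T_\infty^Z=-\infty\}$, on which $Z(T_a^Z)=W^\uparrow(-T_a^Z)\to+\infty$ since $W^\uparrow$ drifts to $+\infty$. Thus $(Z(T_a^Z),T_a^Z)\to(Z(T_\infty^Z),T_\infty^Z)$ in probability along this coupling --- the convention $Z(T_\infty^Z)=+\infty$ when $T_\infty^Z=-\infty$ absorbing the degenerate case --- hence in distribution, which is the convergence asserted in the lemma.

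I expect the only genuine work to be the bookkeeping in the second step: matching ``first release forward in time'' with the upward scan defining $T_a^Z$ under the time-reversal, handling the slot indexing carefully at the two boundaries $k=0$ and $k=-\kappa(a)$, and checking that the re-indexed uniforms really are i.i.d.\ and independent of the reversed walk. Given the coupling lemma this is short and essentially mechanical --- which is why the statement is advertised as immediate --- but it is precisely the place where a sign error or an off-by-one could slip in.
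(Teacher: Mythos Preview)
Your proposal is correct and follows essentially the same approach as the paper. The paper's own proof is extremely terse---it simply asserts that the equality in law is ``clear from the construction, and not difficult (although a bit heavy in notation) to formalize,'' and for the convergence uses that $T_a^Z$ is decreasing in~$a$ (hence has a limit, identified as $T_\infty^Z$) together with continuity of $Z$. Your write-up supplies exactly the bookkeeping the paper omits: gluing $A^+$ and $A^-$ into a bi-infinite process matching $Z$ on $\{k\ge -\kappa(a)\}$, re-indexing the uniforms relative to $\tau_1$, and reading off $T_a^Z$; your convergence argument via $D_{a,k}=D_{\infty,k}$ for large $a$ is the same monotone-coupling idea phrased slightly differently.
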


\begin{proof}
The equality in law between $(A_1(T_1), T_1 - \tau_1)$
and $(Z(T_a^Z), T_a^Z)$ is clear from the construction, and not
difficult (although a bit heavy in notation) to formalize.
Moreover, $T_a^Z$ is by construction decreasing in~$a$ and so its limit
as $a \to +\infty$ exists. It is not hard to show that its limit is
exactly $T_\infty^Z$ and by continuity we deduce that
$Z(T_a^Z) \to Z(T_\infty^Z)$ as $a \to +\infty$, which implies the result.
\end{proof}

Thus to prove Proposition~\ref{prop:T_1}, we only have to establish
that $|T_\infty^Z|$ is (almost surely) finite.
Since $V$ is positive recurrent and starts at~0, it is clear that
$\min\left\{ k \geq 0: D_{\infty,k} = 1 \right\}$ is finite and so to
prove that $|T_\infty^Z|$ is finite, we only have to demonstrate
that $\inf\left\{ k \leq 0: D_{\infty,k} = 1 \right\}$ is finite.
In other words, we have to prove that
$\sup \left\{ k \geq 0: U_{-k} < \psi(W^\uparrow(k)) \right\}$ is finite,
which informally means that $W^\uparrow$ advertizes a release only
finitely many times.

So in the sequel, we consider $(U_k, k \geq 0)$ i.i.d.\ random variables,
uniformly distributed on $[0,1]$ and independent of $W^\uparrow$,
and we define
\[ N = \sum_{k \geq 0} \indicator{U_k < \psi(W^\uparrow(k))} \]
as the number of times $W^\uparrow$ advertizes a release. The proof of Proposition~\ref{prop:T_1} will thus be complete if we can prove that
\begin{equation} \label{eq:goal-N}
	\P(N < +\infty) = 1.
\end{equation}

\subsubsection{Second step} \label{subsub:second-step}

We now assume that $\beta > 1$ and we prove that $\P(N > n) \to 0$ as $n \to +\infty$, which will prove~\eqref{eq:goal-N}. By definition,
\[ \P\left( N = 0 \right) = \P \left( U_k > \psi(W^\uparrow(k)), k \geq 0 \right), \]
and since $W^\uparrow$ and the $U_k$'s are independent this gives
\[ \P\left( N = 0 \right) = \E \left[ \prod_{k \geq 0} \big( 1 - \psi(W^\uparrow(k)) \big) \right]. \]
Let $a \geq 0$: introducing $\varphi(a) = -\log(1-\psi(a))$ and $L^\uparrow(a) = \sum_{k \geq 0} \indicator{W^\uparrow(k) = a}$, the local time at level $a$, we obtain
\begin{equation} \label{eq:N=0}
	\P\left( N = 0 \right) = \E \left[ \exp \left( - \sum_{a \geq 0} \varphi(a) L^\uparrow(a) \right) \right].
\end{equation}

\begin{lemma} \label{lemma:>0}
	The quantity $\sup_{a \geq 0} \E(L^\uparrow(a))$ is finite. In particular, $\P(N = 0) > 0$.
\end{lemma}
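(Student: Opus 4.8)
The plan is to compare the conditioned walk $W^\uparrow$ with the underlying unconditioned random walk and to exploit transience. Let $W$ be the random walk started at~$0$ with step distribution $1-\xi$; since $\E(1-\xi) = 1-\E\xi > 0$, $W$ drifts to $+\infty$ and is in particular transient, so that $q := \P(W(k) \neq 0 \text{ for all } k \geq 1 \mid W(0) = 0)$ is strictly positive (otherwise $W$ would return to~$0$ almost surely, hence infinitely often by the Markov property, contradicting transience). By its very definition, the law of $W^\uparrow$ is the law of $W$ conditioned on the event $\{W(k) \neq 0 \ \forall k \geq 1\}$, which has probability~$q$.

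First I would establish the uniform bound on $\E(L^\uparrow(a))$. For the unconditioned walk write $L_W(a) = \sum_{k\geq 0}\indicator{W(k) = a}$ under $\P(\cdot \mid W(0)=0)$. Given that $W$ ever hits~$a$, each visit to~$a$ is followed by a further return to~$a$ with probability $\P(W \text{ returns to } a \mid W(0)=a) = \P(W \text{ returns to } 0 \mid W(0)=0) = 1-q$ (translation invariance of the increments), so the number of visits to~$a$ is geometric with mean $1/q$ and hence $\E(L_W(a)) \leq 1/q$ for every~$a$. Since conditioning on an event of probability~$q$ multiplies probabilities by at most~$1/q$, we get $\P(W^\uparrow(k) = a) \leq q^{-1}\P(W(k) = a \mid W(0)=0)$ for all $k$ and $a$, and summing over $k$ yields $\E(L^\uparrow(a)) \leq q^{-1}\E(L_W(a)) \leq q^{-2}$, uniformly in~$a$. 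This is the first assertion.

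For the second assertion I would start from~\eqref{eq:N=0}, which reads $\P(N=0) = \E\big[\exp\big(-\sum_{a\geq 0}\varphi(a)L^\uparrow(a)\big)\big]$; it suffices to check that $\sum_{a\geq 0}\varphi(a)L^\uparrow(a) < +\infty$ almost surely, since then the integrand is a.s.\ strictly positive. Taking expectations and using the uniform bound just obtained, $\E\big[\sum_a \varphi(a)L^\uparrow(a)\big] = \sum_a \varphi(a)\E(L^\uparrow(a)) \leq q^{-2}\sum_a \varphi(a)$. As $\psi(a) = (1+a)^{-\beta}\to 0$ we have $\varphi(a) = -\log(1-\psi(a)) \sim \psi(a)$ as $a\to\infty$, and since $\beta>1$ the series $\sum_a \psi(a)$ converges; hence the expectation is finite, so $\sum_a \varphi(a)L^\uparrow(a)$ is a.s.\ finite and $\P(N=0) > 0$.

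The only delicate point is the uniform-in-$a$ control of $\E(L^\uparrow(a))$. The key realization is that one does not need the sharp Green's-function asymptotics $\E(L_W(a)) \to 1/(1-\E\xi)$ (which would call for the renewal theorem), but only the crude uniform bound, which follows at once from transience and spatial homogeneity; and that the passage from $W$ to $W^\uparrow$ costs merely the harmless factor $q^{-1}$ coming from conditioning on a fixed positive-probability event, so no finer analysis of the conditioned walk itself is required.
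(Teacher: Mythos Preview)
Your proof is correct and arrives at the same conclusion by a different, more elementary route than the paper. The paper extends $W^\uparrow$ to a two-sided process $W^*$ by appending, on the negative time axis, an independent random walk $W^-$ with step distribution $\xi-1$; it then argues via a Palm-type shift argument that the local time sequence $(L^*(a))_{a\in\Z}$ of $W^*$ is stationary, reducing the supremum to $\E(L^*(0))$, which is the mean of a geometric variable. You bypass this construction entirely: you bound the Green's function of the \emph{unconditioned} walk $W$ uniformly by $1/q$ using transience and spatial homogeneity, and then transfer the bound to $W^\uparrow$ via the crude inequality $\P(\,\cdot \mid A)\le \P(\,\cdot\,)/\P(A)$ applied to the conditioning event $A$ of probability $q$. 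This yields the slightly weaker constant $q^{-2}$ instead of essentially $q^{-1}$, which is immaterial for the lemma. Your argument is shorter and avoids the two-sided construction and the stationarity claim altogether; the paper's approach, on the other hand, identifies a more structural reason (stationarity of local times) that could be reused if sharper control were ever needed. The second assertion, $\P(N=0)>0$, is handled identically in both: finiteness of $\sum_a\varphi(a)$ for $\beta>1$ together with the uniform local-time bound forces $\sum_a\varphi(a)L^\uparrow(a)<\infty$ almost surely, whence the exponential in~\eqref{eq:N=0} is a.s.\ positive.
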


\begin{proof}
	Let $W^-$ be a random walk with step distribution $\xi-1$, started at~0 and independent from $W^\uparrow$, and for $k \in \Z$ define $W^*(k) = W^\uparrow(k)$ if $k \geq 0$ and $W^*(k) = W^-(-k)$ if $k \leq 0$. Thus, defining $L^*(a) = \sum_{k \in \Z} \indicator{W^*(k) = a}$ we have the obvious inequality $L^\uparrow(a) \leq L^*(a)$ and so we only have to prove that $\sup_{a \in \Z} \E(L^*(a))$ is finite.
	
	It is clear that $L^*$ stays the same if $W^*$ is shifted in time, and that shifting $L^*$ in time amounts to shifting $W^*$ in space. Moreover, for any $w \in \Z$ the process $(W^*(k) + w, k \in \Z)$ shifted at the time of last visit to~0 is equal in distribution to $W^*$. Combining these facts, we see that $L^*$ is a stationary sequence and in particular, $\sup_{a \in \Z} \E(L^*(a)) = \E(L^*(0))$. But by the strong Markov property, it is clear that $L^*(0)$ is a geometric random variable, in particular it has finite first moment. This proves the finiteness of $\sup_a \E(L^\uparrow(a))$.
	
	As for $\P(N > 0)$, we have
	\[ \E \left( \sum_{a \geq 0} \varphi(a) L^\uparrow(a) \right) \leq \sup_a \E(L^\uparrow(a)) \sum_{a \geq 0} \varphi(a), \]
	and since $\varphi(a) \sim a^{-\beta}$ as $a \to +\infty$
and $\beta > 1$, the sum $\sum_a \varphi(a)$ is finite which implies,
in view of the last display, that the random variable
$\sum_a \varphi(a) L^\uparrow(a)$ is almost surely finite.
This proves $\P(N = 0) > 0$ in view of~\eqref{eq:N=0} and concludes
the proof of the lemma.
\end{proof}

We now prove that $\P(N > n) \to 0$ as $n \to +\infty$. Let $W$ be a random walk with step distribution $1-\xi$ and $I = \inf_{k \geq 1} W(k)$. Then, $W^\uparrow$ is by definition equal in distribution to $W$ under $\P_0(\, \cdot \mid I \geq 1)$ (where the subscript refers to the initial state of $W$). Let moreover $B_n$ be the time at which $W$ advertizes a release for the $n$th time, so that
\[ \P(N = n) = b \P_0 \left( B_n < +\infty, B_{n+1} = +\infty, I \geq 1 \right), \]
with $b = 1/\P_0(I \geq 1)$. Writing the event $\{ I \geq 1 \}$ as the union between the two events $\{ \inf_{1 \leq k \leq B_n} W(k) \geq 1 \}$ and $\{ \inf_{k > B_n} W(k) \geq 1 \}$, the strong Markov property at time $B_n$ entails
\[ \P(N = n) = b \E_0 \left( p(W(B_n)) ; B_n < +\infty, \inf_{1 \leq k \leq B_n} W(k) \geq 1 \right), \]
with $p(w) = \P_w(N = 0, I \geq 1)$. Coupling $W$ under $\P_w$ with a version of $W$ under $\P_0$ that stays below it, it is easy to see that $p(w)$ is increasing in $w$ and so
\begin{align*}
	\P(N = n) & \geq b p(0) \P_0 \left( B_n < +\infty, \inf_{1 \leq k \leq B_n} W(k) \geq 1 \right)\\
	& \geq b p(0) \P_0 \left( B_n < +\infty, I \geq 1 \right).
\end{align*}
This last lower bound is equal to $p(0) \P_0(B_n < +\infty \mid I \geq 1)$ which by definition is equal to $p(0) \P(N > n)$. Since $p(0) = \P(N = 0)/b$ is $>0$ by Lemma~\ref{lemma:>0}, dividing by $p(0)$ leads to
\[ \P(N > n) \leq \frac{b\P(N = n)}{\P(N = 0)}. \]
Since $\P(N = n) \to 0$, this finally proves~\eqref{eq:goal-N} and hence Proposition~\ref{prop:T_1}.

\subsubsection{More on the case $\beta > 1$, $\beta \approx 1$}
\label{subsub:intermediate-beta}

The simulation results in Section~\ref{ec:simula} show a rather fuzzy behavior of $\alpha(\beta)$ for $\beta$ close to~1. Indeed, the curves shown in Figure~\ref{fig:betatoalpha} are smooth for small and large values (say, $\beta < 1/2$ and $\beta > 1.2$) but for $\beta$ close to one it is difficult to obtain stable numerical results. Our goal here is to discuss potential interesting phenomena arising for $\beta > 1$ close to one.

The shape of the function $\rho \mapsto F(\rho, \beta)$ depicted in Figure~\ref{fig:beta2toalpha} is typical for large values of~$\beta$, say $\beta > 1.2$. In particular, this function is increasing which makes the regression of $F$ against $1/\log(1/(1-\rho))$, such as in~\eqref{eq:regression}, reasonable. However, as $\beta$ gets closer to one, the shape of this function changes. For instance, Figure~\ref{fig:beta12toalpha} shows simulation results for $F(\rho, 1.2)$ which are representative of $F(\rho, \beta)$ for small~$\beta$, say $1 < \beta < 1.2$. Noticeably, the function $F(\rho,1.2)$ is not monotone in~$\rho$ and so the approximation~\eqref{eq:regression} cannot be valid for every~$\rho$. Rather, we find that $F(\rho, \beta)$ decreases and then increases, and that the regression against $1/\log(1/(1-\rho))$ is only accurate past the minimum.
\begin{figure}[htbp]
	\vspace{2mm}
    \begin{tikzpicture}%
\begin{axis}[
font=\scriptsize,
width=\columnwidth+1cm, height=5.5cm,
    xmin=2, xmax=8,
    ymin=1.45, ymax=2,
    xtick={2,4,6,8},
    every axis x label/.style=
        {at={(ticklabel cs:0.5)},anchor=center},
    scaled ticks=false,
	xlabel={$\log(1/(1-\rho))$},
    legend cell align=left,
    legend style={at={(1,1)}, anchor=north east, inner xsep=1pt, inner ysep=1pt}]

\legend{
	$F(\rho,1.2)$\\
	Regression result\\%
}
\addplot[color=black] plot coordinates {
	(2.014903,	1.840490)
	(2.420368,	1.794465)
	(2.825833,	1.756408)
	(3.231298,	1.738436)
	(3.636763,	1.728122)
	(4.042229,	1.723681)
	(4.447694,	1.732326)
	(4.853159,	1.755836)
	(5.258624,	1.758578)
	(5.664089,	1.772945)
	(6.069554,	1.793862)
	(6.475019,	1.796923)
}; 
\addplot[color=gray,densely dashed] plot coordinates {
	(2.000000,	1.485550)
	(2.100000,	1.507005)
	(2.200000,	1.526509)
	(2.300000,	1.544317)
	(2.400000,	1.560642)
	(2.500000,	1.575660)
	(2.600000,	1.589523)
	(2.700000,	1.602359)
	(2.800000,	1.614279)
	(2.900000,	1.625376)
	(3.000000,	1.635733)
	(3.100000,	1.645423)
	(3.200000,	1.654506)
	(3.300000,	1.663039)
	(3.400000,	1.671071)
	(3.500000,	1.678643)
	(3.600000,	1.685794)
	(3.700000,	1.692559)
	(3.800000,	1.698968)
	(3.900000,	1.705049)
	(4.000000,	1.710825)
	(4.100000,	1.716320)
	(4.200000,	1.721552)
	(4.300000,	1.726542)
	(4.400000,	1.731305)
	(4.500000,	1.735856)
	(4.600000,	1.740209)
	(4.700000,	1.744377)
	(4.800000,	1.748371)
	(4.900000,	1.752202)
	(5.000000,	1.755880)
	(5.100000,	1.759414)
	(5.200000,	1.762812)
	(5.300000,	1.766081)
	(5.400000,	1.769230)
	(5.500000,	1.772264)
	(5.600000,	1.775189)
	(5.700000,	1.778012)
	(5.800000,	1.780738)
	(5.900000,	1.783371)
	(6.000000,	1.785917)
	(6.100000,	1.788379)
	(6.200000,	1.790761)
	(6.300000,	1.793068)
	(6.400000,	1.795303)
	(6.500000,	1.797469)
	(6.600000,	1.799570)
	(6.700000,	1.801607)
	(6.800000,	1.803585)
	(6.900000,	1.805506)
	(7.000000,	1.807371)
	(7.100000,	1.809185)
	(7.200000,	1.810947)
	(7.300000,	1.812662)
	(7.400000,	1.814330)
	(7.500000,	1.815953)
	(7.600000,	1.817534)
	(7.700000,	1.819074)
	(7.800000,	1.820574)
	(7.900000,	1.822037)
	(8.000000,	1.823463)
}; 
\end{axis}
\end{tikzpicture}%
	\vspace{-7mm}
	\caption{Approximating $\alpha(1.2)$ for $R = 2$.}
	\label{fig:beta12toalpha}
\end{figure}
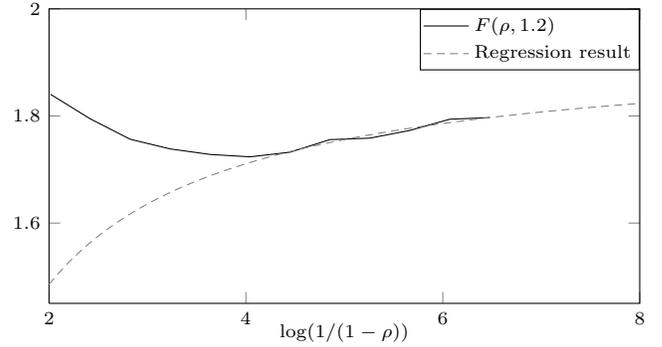

Regressing the curve obtained in Figure~\ref{fig:beta12toalpha} past
the minimum leads to the approximation $\alpha(1.2)\approx 1.94$,
which is still very much in line with our theoretical result
$\alpha(1.2) = 2$. However, the point where the minimum of the function $\rho \mapsto F(\rho, \beta)$ is attained, shifts to the right when $\beta$ gets closer to~1, leaving us with less points against which to regress.
In particular, for $\beta < 1.2$ one would need to simulate the system at loads higher than $0.999$ to get accurate results.

We suspect that this numerical instability is caused by heavy tails phenomena that seem to appear for $\beta < 2$. More precisely, inspecting the proofs of Proposition~\ref{prop:stab} and Theorem~\ref{thm:alpha-beta=infty}, one sees that Corollary~\ref{cor:T^*} is not strong enough for the proofs of the case $\beta = +\infty$ to go through directly. Indeed, instead of controlling the behavior of $\A(T^*)$ and $T^*$ in distribution, one needs (at least, with the proposed proof strategy) to control their mean behavior. Let us do a small computation: let $T_\infty^Z$ be the random variable introduced in Lemma~\ref{lemma:limiting-picture}, which is the weak limit of $T_1 - \tau_1$, and let $B^\uparrow$ be the first time the process $W^\uparrow$ advertizes a release. Then
\[ \E(|T_\infty^Z| ; T_\infty^Z \leq 0) = \E(B^\uparrow; B^\uparrow < +\infty) = \sum_{k \geq 0} k \P(B^\uparrow = k), \]
and, as before, we have
\[ \P(B^\uparrow = k) = \E\left[ \psi(W^\uparrow(k)) \prod_{i < k} (1-\psi(W^\uparrow(i)))\right]. \]
Thus for large $k$, we should have $\P(B^\uparrow = k) \approx k^{-\beta}$ which suggests that $B^\uparrow$, and in particular $|T_\infty^Z|$, has infinite mean for $\beta \leq 2$, although both random
variables are almost surely finite for $\beta > 1$.
It is possible to use this result to show that the (almost surely finite) weak limits of $\A(T^*)$ and $T^* - \tau_{\max}$ have infinite first moment. This potentially invalidates the back-of-the-envelope
computations in Section~\ref{sub:lingering-effect}, and so more care is needed.
For instance, simulation experiments for $\beta = 1.2$ and $R = 2$
suggest that $\E_{a_1}(A_1(T^*))$ grows as $a_1^{0.4}$ as
$a_1 \to +\infty$.

\section{Broader implications} \label{sec:insight}

Motivated by the poor heavy traffic delay performance of ``cautious'' activation rules in queue-based schemes for distributed medium access control, we investigated more aggressive schemes. Our main contribution lies in highlighting a new effect that we called the lingering effect and in studying the performance ramifications of this effect for a special topology. In this section we explain and discuss various directions in which our framework could possibly be extended.

First of all, it would be straightforward to extend our results to the
following asymmetric case: instead of having $R$ queues in each group
with identically distributed arrival processes across queues,
we have two groups of $R_1$ and $R_2$ queues and the arrivals into the
$k$th queue of group $g$ have distribution $\xi_{g,k}$.
We chose to study a symmetric scenario for technical reasons,
since then there is no need to label queues individually.
In this setting, the lingering effect will occur whenever,
informally speaking, the two dominant queues of at least one of the
two groups have the same arrival rate.
For instance, the delay will scale like $1/(1{-}\rho)^2$
with $\rho = \max_k \E(\xi_{1,k}) + \max_k \E(\xi_{2,k})$ if the condition
$\E(\xi_{1,1}) = \E(\xi_{1,2}) \geq \max_{g,k} \E(\xi_{g,k})$ is satisfied.

We believe that the insights provided by the complete bi-partite
constraint graph carry over to more general topologies.
Note that for a general topology, our model needs to be changed,
since one needs then to specify more precisely how queues become active.
One can for instance think of queues going into back-off,
and then trying to grab the channel at some rate.
We conjecture that whenever the constraint graph is not complete
and thus contains an independent set of several nodes, the lingering
effect can rear its head provided some algebraic condition between
the arrival rates at the various queues is satisfied.
It would be very interesting to be able to formulate a precise
and formal conjecture reflecting this intuition,
and most probably even more challenging to prove it.

\bibliographystyle{abbrv}

\begin{thebibliography}{10}

\bibitem{Bouman11:1}
N.~Bouman, S.~Borst, and J.~van Leeuwaarden.
\newblock Achievable delay performance in {CSMA} networks.
\newblock In {\em Proc.\ Allerton Conf.}, 2011.

\bibitem{CS06}
P.~Chaporkar and S.~Sarkar.
\newblock Stable scheduling policies for maximizing throughput in generalized
  constrained queueing.
\newblock In {\em Proc.\ Infocom}, 2006.

\bibitem{Feuillet10:0}
M.~Feuillet, A.~Proutiere, and P.~Robert.
\newblock Random capture algorithms: Fluid limits and stability.
\newblock In {\em Proc. ITA Workshop}, 2010.

\bibitem{Gamarnik06:0}
D.~Gamarnik and A.~Zeevi.
\newblock Validity of heavy traffic steady-state approximation in generalized
  {J}ackson networks.
\newblock {\em Ann. Appl. Probab.}, 16(1):56--90, 2006.

\bibitem{GNT06}
L.~Georgiadis, M.~Neely, and L.~Tassiulas.
\newblock Resource allocation and cross-layer control in wireless networks.
\newblock {\em Found.\ Trends Netw.}, 1:1--144, 2006.

\bibitem{GBW12}
J.~Ghaderi, S.~Borst, and P.~Whiting.
\newblock Backlog-based random-access in wireless networks: Fluid limits and
  intability issues.
\newblock In {\em Proc.\ IEEE WiOpt Conf.}, 2012.

\bibitem{Jiang10:0}
L.~Jiang, D.~Shah, J.~Shin, and J.~Walrand.
\newblock Distributed random access algorithm: scheduling and congestion
  control.
\newblock {\em IEEE Trans. Inform. Theory}, 56(12):6182--6207, 2010.

\bibitem{JW08}
L.~Jiang and J.~Walrand.
\newblock A distributed {CSMA} algorithm for throughput and utility
  maximization in wireless networks.
\newblock In {\em Proc.\ Allerton Conf.}, 2008.

\bibitem{ME08}
P.~Marbach and A.~Eryilmaz.
\newblock A backlog-based {CSMA} mechanism to achieve fairness and
  throughput-optimality in multihop wireless networks.
\newblock In {\em Proc.\ Allerton Conf.}, 2008.

\bibitem{MSZ06}
E.~Modiano, D.~Shah, and G.~Zussman.
\newblock Maximizing throughput in wireless networks via gossiping.
\newblock In {\em Proc.\ ACM SIGMETRICS/Performance Conf.}, 2006.

\bibitem{RSS09a}
S.~Rajagopalan, D.~Shah, and J.~Shin.
\newblock Network adiabatic theorem: An efficient randomized protocol for
  content resolution.
\newblock In {\em Proc.\ ACM SIGMETRICS/Performance Conf.}, 2009.

\bibitem{Shah12:0}
D.~Shah and J.~Shin.
\newblock Randomized scheduling algorithm for queueing networks.
\newblock {\em Ann. Appl. Probab.}, 22(1):128--171, 2012.

\bibitem{Shah12:1}
D.~Shah and D.~Wischik.
\newblock Switched networks with maximum weight policies: {F}luid approximation
  and multiplicative state space collapse.
\newblock {\em Ann. Appl. Probab.}, 22(1):70--127, 2012.

\bibitem{SMS06}
G.~Sharma, R.~Mazumdar, and N.~Shroff.
\newblock On the complexity of scheduling in wireless networks.
\newblock In {\em Proc.\ MobiCom}, 2006.

\bibitem{SSM07}
G.~Sharma, N.~Shroff, and R.~Mazumdar.
\newblock Joint congestion control and distributed scheduling for throughput
  guarantees in wireless networks.
\newblock In {\em Proc.\ Infocom}, 2007.

\bibitem{Stolyar04}
A.~Stolyar.
\newblock Maxweight scheduling in a generalized switch: State space collapse
  and workload minimization in heavy traffic.
\newblock {\em Ann.\ Appl.\ Prob.}, 14:1--53, 2004.

\bibitem{Tassiulas98}
L.~Tassiulas.
\newblock Linear complexity algorithms for maximum throughput in radio networks
  and input queued switches.
\newblock In {\em Proc.\ Infocom}, 1998.

\bibitem{TE92}
L.~Tassiulas and A.~Ephremides.
\newblock Stability properties of constrained queueing systems and scheduling
  policies for maximum throughput in multihop radio networks.
\newblock {\em IEEE Trans.\ Aut.\ Contr.}, 37:1936--1948, 1992.

\bibitem{TE93}
L.~Tassiulas and A.~Ephremides.
\newblock Dynamic server allocation to parallel queues with randomly varying
  connectivity.
\newblock {\em IEEE Trans.\ Inf.\ Theory}, 39:466--478, 1993.

\bibitem{WS06}
X.~Wu and R.~Srikant.
\newblock Scheduling efficiency of distributed greedy scheduling algorithms in
  wireless networks.
\newblock In {\em Proc.\ Infocom}, 2006.

\end{thebibliography}

%
%

\end{document}